\newtheorem{thm}{Theorem}[section]
\newtheorem{lem}{Lemma}[section]
\newtheorem{defn}{Definition}[section]
\begin{document}
 \title{Two player game variant of the Erd\H{o}s-Szekeres problem}
\date{}

\newcommand{\atgen}{\symbol{'100}}
\author{
Parikshit Kolipaka \thanks{
Department of Computer Science and Automation,
Indian Institute of Science, Bangalore, India.
E-mail: \texttt{parikshit\atgen{}csa.iisc.ernet.in}
} 
\and
Sathish Govindarajan \thanks{
Department of Computer Science and Automation,
Indian Institute of Science, Bangalore, India.
E-mail: \texttt{gsat\atgen{}csa.iisc.ernet.in}
}
}
\maketitle
\thispagestyle{empty}
\begin{abstract}
The classical Erd\H{o}s-Szekeres theorem states that a convex $k$-gon exists in every sufficiently large point set.
 This problem has been well studied and finding tight asymptotic bounds is considered a challenging open problem.
Several  variants of the Erd\H{o}s-Szekeres problem have been posed and studied in the last two decades. The well studied variants include the empty convex $k$-gon  problem, convex $k$-gon with specified number of interior points 
 and the chromatic variant.

In this paper, we introduce the following two player game variant of the Erd\H{o}s-Szekeres problem: Consider a two player game
 where each player playing in alternate turns, place points in the plane.
 The objective of the game is to avoid the formation of the convex k-gon among the placed points.
The game ends when a convex k-gon is formed and the player who placed the last point loses the game.
In our paper we show a winning strategy for the player who plays second in the convex 5-gon game and the empty convex 5-gon game 
by considering convex layer configurations at each step. We prove that the game always 
ends in the 9th step by showing that the game reaches a specific set of configurations.
\end{abstract}

\section{Introduction}
\noindent The Erd\H{o}s-Szekeres problem is defined as follows: 
\textit{For any integer k, k \ensuremath{\ge}
3, determine the smallest positive integer N(k) such that any planar point set in general position that has at least N(k) points contains k points that are the vertices of a
convex k-gon.}

In 1935 Erd\H{o}s and Szekeres proved the finiteness of $N(k)$ using Ramsey theory \cite{Erdos35acombinatorial}.
There has been a series of improvements to bound the value of $N(k)$ and the current best known bounds 
are 2$^{k-2}$+ 1 \ensuremath{\le} $N(k)$ \ensuremath{\le} $\left(_{k-2}^{2k-5}\right)$+1 \cite{Erdos78somemore,Kleitman98findingconvex,Toth04theerdos-szekeres}.
 Erd\H{o}s and Szekeres conjectured that the current lower bound is tight. This conjecture has been proved for $k$ \ensuremath{\le} $6$.
$N(4)=5$ was shown by Klein and $N(5)= 9$ was shown by Kalbfeisch $et.al.$ \cite{kalbfeisch}.
 $N(6)=17$ has been proved by Szekeres and Peters using an combinatorial model of planar configurations  \cite{Szekeres_computersolution}.
See survey \cite{Morris00theerdos-szekeres,newsurvey} for a detailed description about the history of the problem  and its variants.

Erd\H{o}s asked the empty convex $k$-gon problem which is defined as follows:
\textit{ For any integer k, k \ensuremath{\ge}
3, determine the smallest positive integer $H(k)$ such that any planar point set in general position that has at least $H(k)$ points contains k points
 that are the vertices of a
an empty convex $k$-gon, i.e., the vertices of a convex $k$-gon containing
no points in its interior.}

It is easy to see that $H(4) = 5$. $H(5) = 10$ was proved by Harboth \cite{Harborth79konvexefunfecke}. Gerken \cite{Gerken05onempty} and Nicholes \cite{1340381}  proved independently the existence of a empty hexagon.
  The current best bounds on $H(6)$ are $30 \leq H(6) \leq 463$ \cite{kos,overmars}. Horton showed that $H(k)$ does not exist for any $k$, $k$ \ensuremath{\ge}
$7$ \cite{Horton83setswith}.

The Erd\H{o}s-Szekeres problem has been studied in higher dimensions \cite{Erdos35acombinatorial,Erdos78somemore,JOUR}.
Other related variants that are well studied are the convex $k$-gon with specified number of 
interior points \cite{Avisonthe1,on,Fevens04improvedlower,Hosono2005201,1667532}
 and the chromatic variant  \cite{Devillers03chromaticvariants,nonconvex}.
                
 We introduce the following two player game variant of Erd\H{o}s-Szekeres problem:

  \textit{Consider a two player game where each player playing in alternate turns, place points in the plane.
 The objective of the game is to avoid the formation of the convex k-gon among the placed points.
The game will end when a convex k-gon is formed and the player who placed the last point loses the game.}

In the game we assume that each player has infinite computational resources and hence place their points in optimal manner.

 Since $N(k)$ is finite, we know that the game will end at $N(k)$ number of steps. Can the game end before $N(k)$ steps?
 Define $N_{G}(k)$ as the minimum number of steps before the game ends. In this paper we focus on finding the exact value of $N_{G}(k)$. 
We denote the player who plays first as player 1 and the player who plays second as player 2.

We also consider the two player game for the empty convex $k$-gon and correspondingly define $H_{G}(k)$.
 It is easy to see that  $N_{G}(3)=H_{G}(3)=3$, $N_{G}(4)= H_{G}(4)=5$.

\subsection*{Results}
In this paper, we focus on the Erd\H{o}s-Szekeres two player game for $k = 5$. We show a winning strategy  for player 2 and prove that $N_{G}(5)=9$ and $H_{G}(5)=9$.
i.e., the game will end in the 9th step.

We consider convex layer configurations at each step and give a strategy for player 2 such that the game will
reach a specific set of configurations until the 8th step and finally we argue in the 9th step that a convex 5-gon or an empty convex 5-gon is formed.

\subsection*{Organization of the paper}
Section 2 contains the preliminaries and definitions that we will be using in the rest of the paper. Section 3  describes the proofs that
 the empty convex 5-gon game and convex 5-gon game ends in 9th step and player 2 wins
the game.

\section{Preliminaries and Definitions}

We assume in the rest of this paper that our point set $P$ is in general position, i.e., no 3 points of the point set are collinear.
We denote the convex hull of a point set $P$ as $conv(P)$ and the vertices of $conv(P)$ as $CH(P)$.

\begin{defn}\label{convex k-gon}\textbf{Convex k-gon:}
  is a convex polygon with k vertices.
\end{defn}

\begin{defn}\label{empty convex k-gon}\textbf{Empty Convex k-gon:}
  is a convex k-gon with
no points in its interior.
\end{defn}

\begin{defn}\label{Points in convex position}\textbf{Points in convex position:}
A point set $P$ is said to be in convex position if $CH(P)= P$. 
\end{defn}

\begin{defn}\label{convex layers}\textbf{Type of a point set P:}
A point set $P$ in of $type (i_{1},i_{2},...i_{k})$, $|P|=\sum|i_{k}|$, if $P_{1}= CH(P)$ is of size $i_{1}$, $P_{2}= CH(P \setminus P_{1})$ is of 
size $i_{2}$ \dots
\end{defn}

The $type$ of point set $P$ describes the sizes of the different 
convex layers of $P$. We denote $P_{1}$ as the first convex layer and $P_{2}$ as the second convex layer.

\begin{defn}\label{using}\textbf{U(i,j) of point set P:}
 Point set having $i$ points of the first convex layer of $P$  
and $j$ points of the second convex layer of  $P$.
\end{defn}

\begin{defn}\label{type1}\textbf{Type 1 Beam:}
  $A:BC$ denotes the region of the plane  formed by deleting triangle $ABC$ from the convex region in the plane 
bounded by the rays $\overrightarrow{AB}$ and $\overrightarrow{AC}$ (see figure \ref{typa}).  
\end{defn}

\begin{defn}\label{type2}\textbf{Type 2 Beam:}  $AB:CD$ denotes the region of the plane formed by deleting convex 4-gon $ABCD$ 
from the convex region in the plane
 bounded by the segment $\overline{AB}$ and the rays $\overrightarrow{AD}$ and $\overrightarrow{BC}$ (see figure \ref{typb}). 
\end{defn}

\begin{figure}[ht]
\begin{minipage}[b]{0.5\linewidth}
  \centering
   {\includegraphics[width=0.5\textwidth]{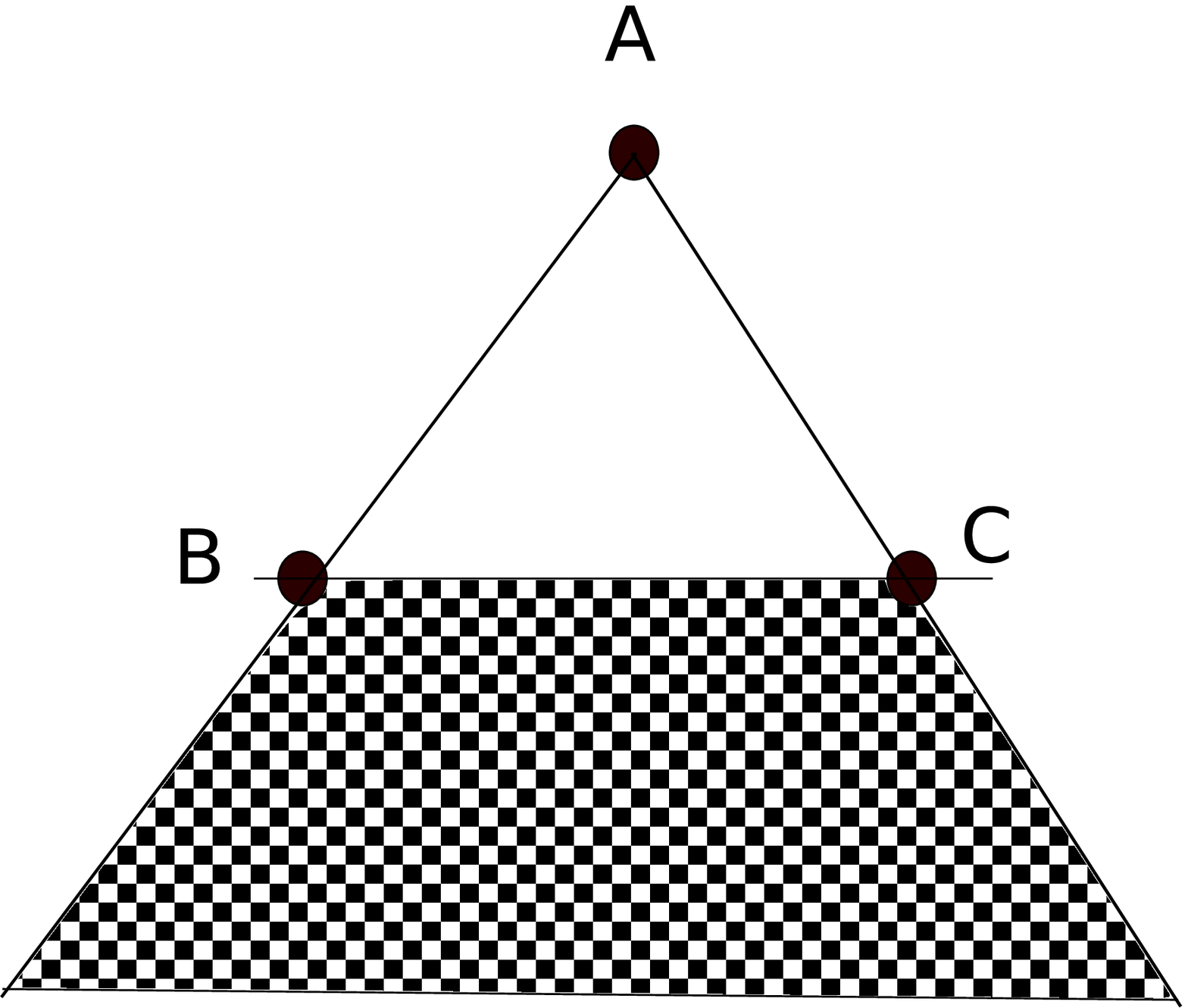}} 
   \hspace{0.3\textwidth}
   \caption{Type 1 beam}
    \label{typa}
\end{minipage}
\begin{minipage}[b]{0.5\linewidth}
  \centering
   {\includegraphics[width=0.5\textwidth]{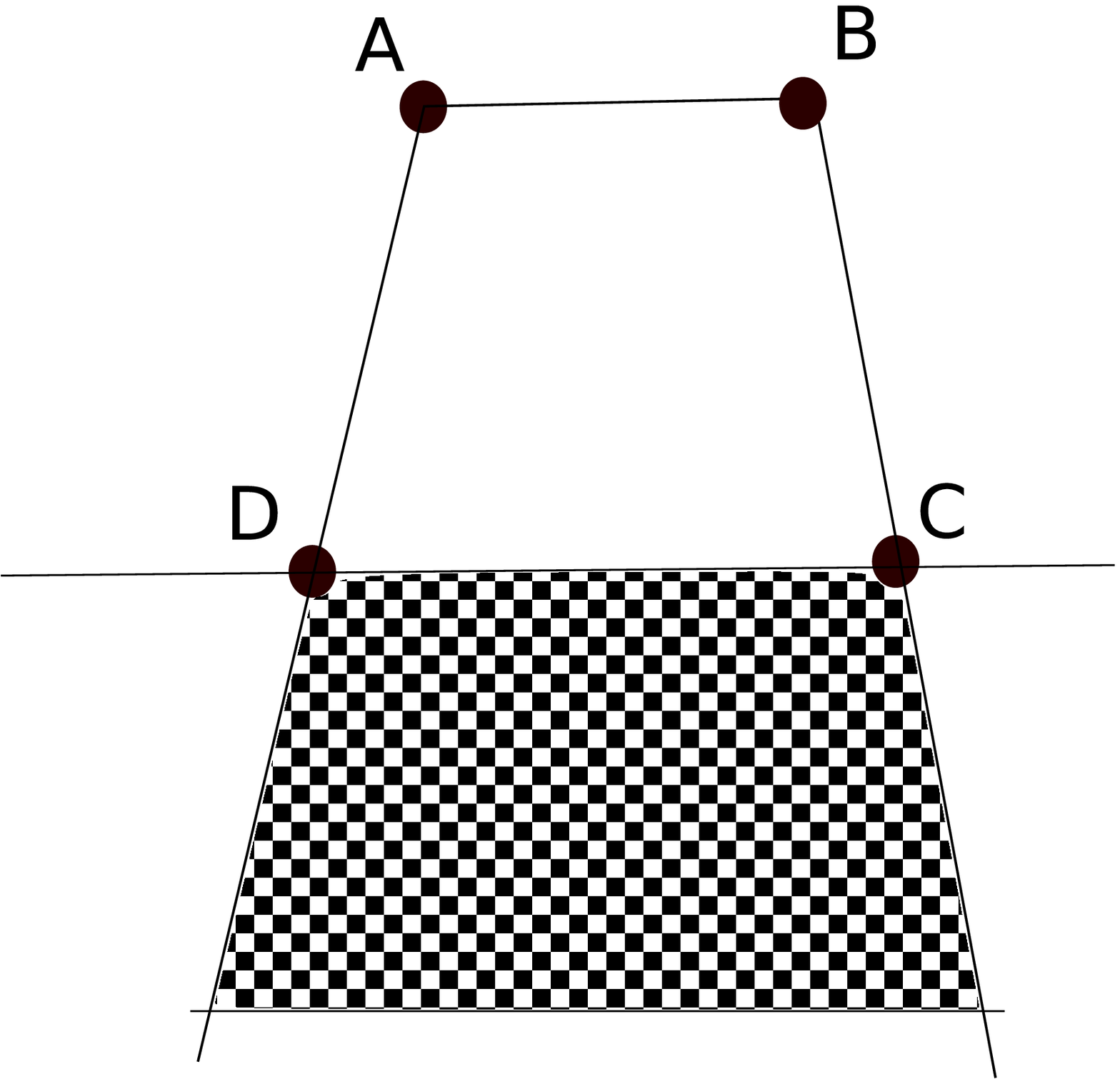}} 
   \hspace{0.3\textwidth}
   \caption{Type 2 beam}
    \label{typb}
\end{minipage}
\end{figure}
 \begin{figure}
  \centering
  {\label{4gonregionse}\includegraphics[width=0.5\textwidth]{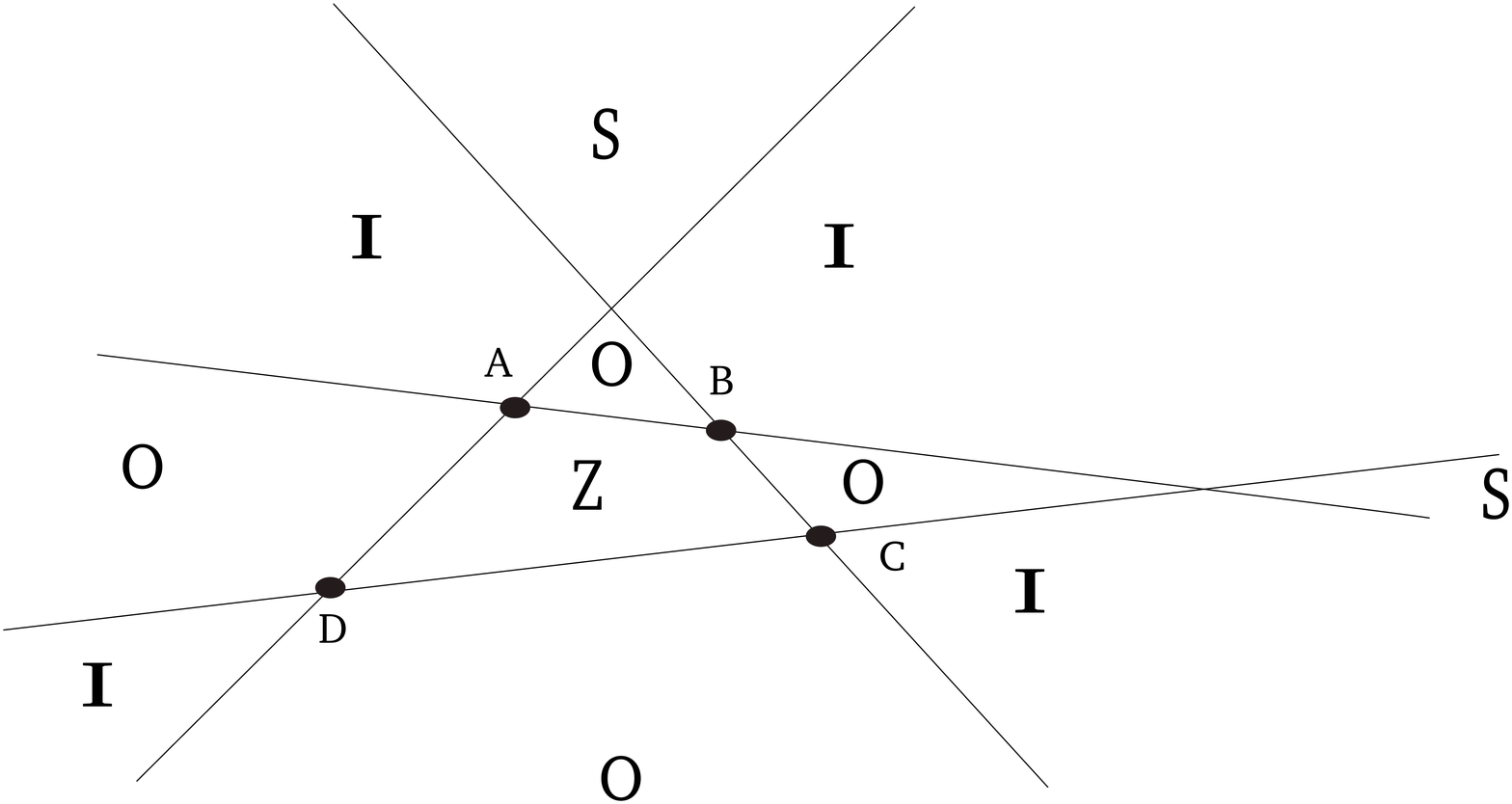}} 
  \hspace{0.1\textwidth}
  \caption{Divison of convex 4-gon into regions}
\end{figure}

 Let $A,B,C,D$ be 4 points in convex position.
\begin{defn}\label{coongrm}\textbf{Regions of empty convex 4-gon:}
The $ABCD$ convex 4-gon  divides the plane into 4 types of regions $I,O,S,Z$ (see $figure$ 3). 
$O$ region is the region such that any point in it along with $ABCD$  forms a convex 5-gon. 
$I$ and $Z$ regions are the regions such that any point in these regions along with $ABCD$ forms a point set of $type(4,1)$. $I$ region is outside the convex 4-gon and
 $Z$ region is inside the convex 4-gon.
$S$ region is the region such that any point in it along with $ABCD$  forms a point set of $type(3,2)$.
\end{defn}

We define the following set of point configurations that will be used in our proofs (see $figure$ \ref{tree}). 
 
\begin{defn}\label{config4}\textbf{Configuration 4:}
A 4 point set forming a parallelogram.
\end{defn}
\begin{defn}\label{config5.1}\textbf{Configuration 5.1:}
A 5 point set of $type(4,1)$ where the 4 points in the first convex layer form a parallelogram.
\end{defn}
\begin{defn}\label{config5.2}\textbf{Configuration 5.2:}
A 5 point set of $type(4,1)$ where the 3 points in the first convex layer with the 1 point in the interior form a parallelogram.
\end{defn}

 \begin{figure}
  \centering
  {\includegraphics[width=1.0\textwidth]{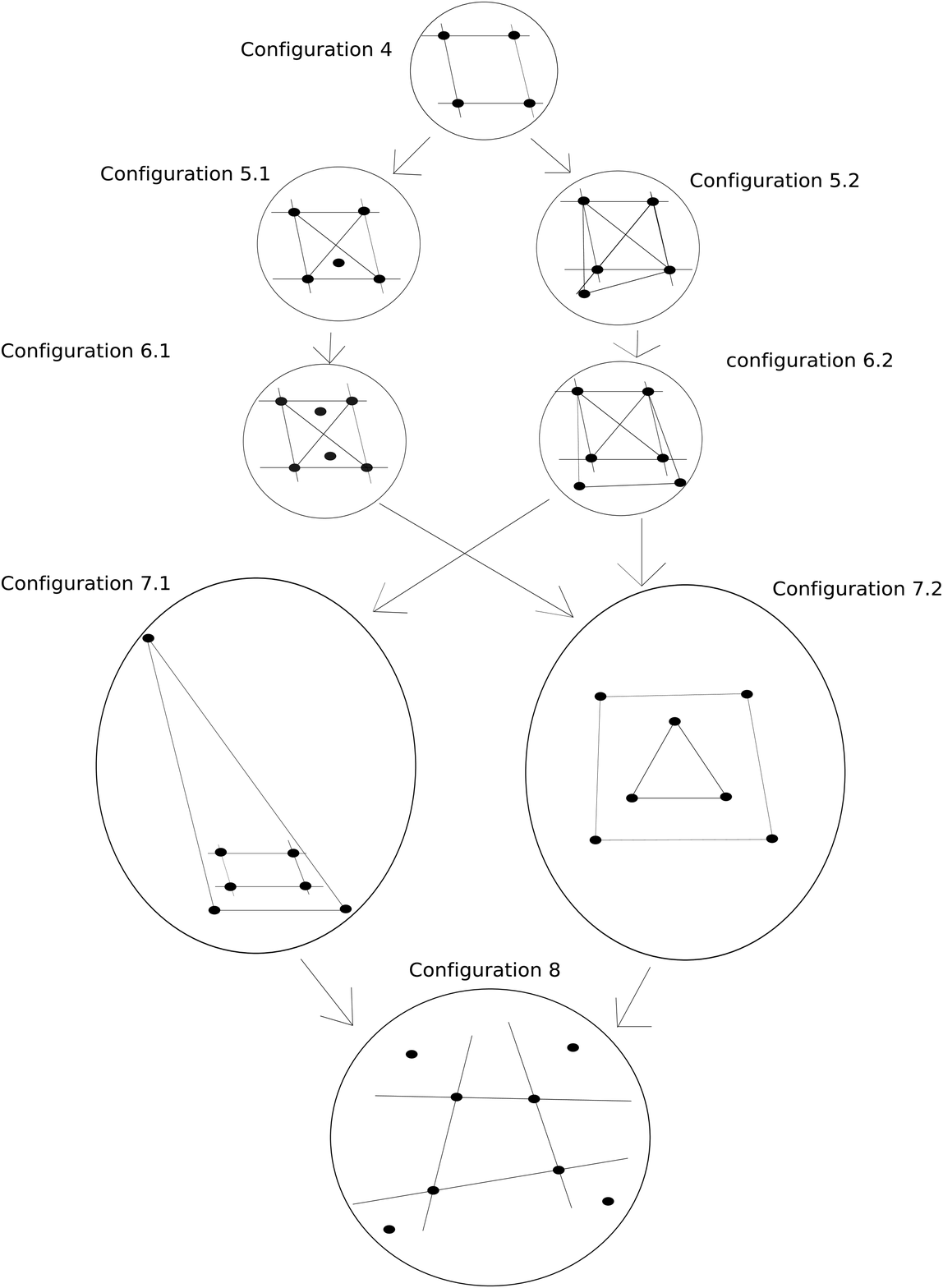}} 
  \hspace{1.0\textwidth}
  \caption{Game tree for the convex 5-gon and empty convex 5-gon}
\label{tree}
 \end{figure}

\begin{defn}\label{config6.1}\textbf{Configuration 6.1:} A  6 point set  of $type(4,2)$ where 
4 points in the first convex layer form a parallelogram and the 2 points inside the parrallelogram are symmetrically placed in opposite triangles formed by diagonals of parallelogram. 
\end{defn}

\begin{defn}\label{config6.2}\textbf{Configuration 6.2:} A  6 point set of $type(4,2)$ where
4 points in the first convex layer form a trapezoid and the 2 points inside the trapezoid are symmetrically placed 
 in  opposite triangles formed by the diagonals.
\end{defn} 

\begin{defn}\label{config7.1}\textbf{Configuration 7.1:}
 A 7 point set of $type(3,4)$  where  the 3 points of the first convex layer are in  different $I$ regions 
of the  parralleogram formed by the 4 points in the second convex layer.
\end{defn}

\begin{defn}\label{config7.2}\textbf{Configuration 7.2:}
 A 7 point set of $type(4,3)$  such that it does not have an empty convex 5-gon.
\end{defn}
 \begin{defn}\label{config8}\textbf{Configuration 8:}
 An  8 point set of $type(4,4)$  where  the 4 points of the first convex layer are placed such that each point lies in  different $I$ region 
of the  convex 4-gon of the second convex layer.
\end{defn}

Note that the above configurations of $i$ points  $4 \leq i \leq 8 $ do not contain a empty convex 5-gon and all configurations, except configuration 7.2 
and 8 do not contain convex 5-gon.  
\section{Game for the empty convex 5-gon and convex 5-gon}
 In this section, we show that the two player game for the empty convex 5-gon  and the convex 5-gon  ends in  9 moves and the second player has a winning strategy. 
\subsection*{Overview of our proof and player 2's strategy to win the empty convex 5-gon game and the convex 5-gon game.}

In our game, Player 1 plays in the odd steps (1st, 3rd, \dots) and Player 2 plays in the even steps (2nd, 4th, \dots).
 In player 1's turn, we argue that any point added without forming an convex 5-gon or empty convex 5-gon will always result 
in specific configurations. In player 2's turn we show a feasible region where if the point is placed will result in specific configurations that are
favorable for player 2.

We now describe the winning strategy for player 2:
Player 2  will  place the point in the 4th step such that the resultant point set forms a parallelogram (configuration 4). 
In the  6th step, we show that there exists a feasible region in both configuration 5.1 and 5.2 where the 6th point is placed, so that it will reach 
configuration 6.1 or configuration 6.2. Similarly in the 8th step we show that there will exist a feasible region in configuration 7.1 and 7.2  where the 8th point 
is placed such that the resultant point set is configuration 8.

In player 1's turn we show that any point added by player 1 without forming an convex k-gon or empty convex k-gon results in configuration 5.1 or 5.2 (5th step) and in configuration 7.1 or 7.2 (7th step). 

Finally, we argue that any point added to the configuration 8 will result in the formation of a convex 5-gon or empty convex 5-gon and player 2 will always win
 in the 9th step.

Thus, any convex 5-gon/empty convex 5-gon game follows a path in the game tree shown in $figure$ \ref{tree}.
The proofs for the convex 5-gon game and the empty convex 5-gon are similar, so we have combined them.
 
In the odd step (player 1's turn to place the point) the feasible regions that are formed in the convex 5-gon game
 are a subset of the feasible regions that are formed in the empty convex 5-gon game. This is because the regions which 
are not covered by the O regions of the convex 4-gons are a subset of the regions which are not covered by the O regions of the empty convex 4-gons.
 So in the odd step we give the proof for the empty convex 5-gon game and it is sufficient for both the games.

In the even step (player 2's turn to place the point) we give the proof for the convex 5-gon game by showing a feasible region where player 2 
places the point and this is sufficient for both the games because a feasible region in the convex 5-gon game is also a feasible region in the empty convex
5-gon game.

In the empty convex 5-gon game, we give a separate proof for point configurations which have a convex 5-gon that is not empty.

\subsection{Proof for  $H_{G}(5)=9$ and $N_{G}(5)=9$}
We make the following observations on the number of points that can contained in type 1 and type 2 beams without forming an empty convex 5-gon. We will assume that the points in the beam are in convex position with 
the points that form the beam. Thus, if the point set does not contain a convex 5-gon, type 1 beam has atmost 1 point and type 2 beam does not contain any point.

\begin{figure}
  \centering
  {\includegraphics[width=0.3\textwidth]{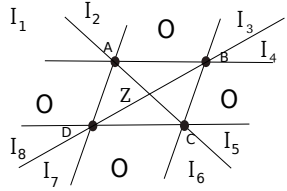}} 
  \hspace{0.1\textwidth}
  \caption{Divison of parallelogram regions}
  \label{parrallelogram}
\end{figure}

\begin{lem}
\label{lem1}
The game for the convex 5-gon and the empty convex 5-gon will always reach either configuration 6.1 or configuration 6.2 at the end of 6th step.
 
\end{lem}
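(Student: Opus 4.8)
The plan is to follow the game from the fourth move onward, combining Player~2's strategy with the region decomposition of a convex $4$-gon given in Definition~\ref{coongrm}. After the first three points are placed they form a triangle, and since any triangle can be completed to a parallelogram in three ways, at the fourth move Player~2 can always reach Configuration~4. I would fix this parallelogram $ABCD$ with centre $O$ and then analyse the fifth move (Player~1) and the sixth move (Player~2) in turn.

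For the fifth move I would show that any point that does not immediately create a convex $5$-gon yields Configuration~5.1 or Configuration~5.2. The key geometric fact is that a single new point added to four points in convex position can absorb at most one hull vertex: if it absorbed two, then two of $A,B,C,D$ would lie inside the triangle spanned by the new point and the remaining two, contradicting convex position (one checks the cases where those two remaining vertices are adjacent and where they are opposite separately). Hence a parallelogram admits no $type(3,2)$ outcome (the $S$ region is empty), and the only possibilities are the $O$ region, which forms a convex pentagon and loses for Player~1, the $Z$ region (an interior point, leaving $ABCD$ on the hull, i.e.\ Configuration~5.1), and the $I$ region (an exterior point absorbing one vertex, so that the three surviving hull vertices together with the absorbed one still form the parallelogram, i.e.\ Configuration~5.2). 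Since the feasible region of the convex game is contained in that of the empty game, running this case analysis for the empty game is enough.

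For the sixth move Player~2 restores symmetry about $O$. From Configuration~5.1 the interior point $E$ is reflected to $F=2O-E$, which again lies inside $ABCD$ and in the diagonal triangle opposite to $E$, producing the centrally symmetric $type(4,2)$ set of Configuration~6.1; a short argument shows that a centrally symmetric set with parallelogram hull and an antipodal interior pair contains no convex $5$-gon at all. From Configuration~5.2 the hull is a general quadrilateral $PABC$ with the absorbed vertex (call it $D$) inside, and Player~2 plays $F$ beyond the opposite vertex $B$ so as to bury it as well. The hull $AFCP$ can be made a parallelogram only by the antipodal choice $F=2O-P$, which forces $B$ and $D$ onto a diagonal and violates general position; Player~2 therefore places $F$ slightly off this choice to obtain an isosceles trapezoid hull with $B,D$ symmetric in opposite diagonal triangles, which is Configuration~6.2. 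Because a move feasible for the convex game is also feasible for the empty game, carrying out this step for the convex $5$-gon game suffices for both.

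The main obstacle is the sixth move in the Configuration~5.2 case: one must show that, inside the feasible region that avoids a convex (and hence empty convex) pentagon, there is room to place $F$ so that the hull is an isosceles trapezoid \emph{and} the two buried points $B,D$ are symmetrically placed in opposite diagonal triangles, and one must confirm that neither Configuration~6.1 nor Configuration~6.2 contains an empty convex $5$-gon. This last verification is where the beam bounds enter: the $O$ regions to be avoided are precisely unions of type~1 and type~2 beams, and the facts that a type~1 beam holds at most one point and a type~2 beam none rule out exactly the adversarial placements that could otherwise smuggle in a pentagon, forcing the game into Configuration~6.1 or Configuration~6.2 as claimed.
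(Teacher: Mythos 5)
Your overall skeleton matches the paper's proof (step 4: complete the triangle to a parallelogram; step 5: classify Player 1's reply via the $I,O,Z$ regions into configurations 5.1 and 5.2; step 6: a symmetric placement by Player 2), but there are two genuine problems. First, a minor one: your ``key geometric fact'' that a point added to four points in convex position can absorb at most one hull vertex is false for a general convex quadrilateral --- that is exactly why Definition \ref{coongrm} has an $S$ region at all (take a long flat quadrilateral $A=(0,0)$, $B=(10,0)$, $C=(6,1)$, $D=(4,1)$ and $P=(5,100)$: then $C,D$ both lie inside triangle $ABP$ with no contradiction to convex position). The statement is true for a parallelogram, but in the adjacent-vertices case the proof must use the parallel sides (the two wedges of points swallowing $C$ and swallowing $D$ are separated by the parallel lines $BC$ and $AD$), not ``convex position.'' This is fixable, and the conclusion (empty $S$ region, hence only 5.1 and 5.2 arise) agrees with the paper's figure.

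The serious gap is your sixth move from configuration 5.2. The paper pairs the $I$ regions ($\left\{I_{1},I_{8}\right\}$, etc.) so that Player 2 buries the vertex \emph{adjacent} to the already-buried one --- both endpoints of one side, say $A$ and $D$ --- with $EF$ parallel to $AD$; the hull $\{E,F,B,C\}$ is then automatically a trapezoid ($EF\parallel AD\parallel BC$) with $A,D$ symmetric inside, i.e.\ configuration 6.2. You instead bury the \emph{opposite} vertex, and both halves of your argument there fail. The claim that $F=2O-P$ ``forces $B$ and $D$ onto a diagonal and violates general position'' is simply false: with $A=(0,0),B=(1,0),C=(1,1),D=(0,1)$, $P=(-0.2,1.4)$, $F=(1.2,-0.4)$, the hull $PCFA$ is a parallelogram whose diagonals miss $B$ and $D$, and central symmetry creates no collinear triple among the six points. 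Worse, the perturbed placement cannot deliver what you claim: an isosceles trapezoid hull $\{P,A,F,C\}$ whose reflection symmetry exchanges the fixed interior points $B$ and $D$ would need axis equal to the perpendicular bisector of $BD$ through the centre, and then either $A,C$ swap (forcing $AC\parallel BD$, impossible for the diagonals of a parallelogram) or $A,C$ both lie on the axis (forcing $AC\perp BD$, i.e.\ a rhombus); generically your $F$ yields a quadrilateral hull that is neither a trapezoid nor has a symmetric interior pair, so the game reaches neither configuration 6.1 nor 6.2 as defined. Ironically, the exact antipodal choice you discard does produce a parallelogram hull with a centrally symmetric interior pair (a configuration-6.1-type outcome), so the branch is repairable --- but not by the perturbation you propose, and in any case the feasibility of the move (that it avoids the $O$ regions of the empty convex 4-gons of configuration 5.2) is only gestured at via ``beam bounds'' rather than checked, which is the content of the paper's figure-based case analysis.
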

\begin{proof} A triangle is formed by the first 3 points of the game. The 4th point is placed in such a manner that the resultant point set forms a parallelogram $ABCD$ (configuration 4).
For the 5th step, we divide the regions of the parallelogram into $I, O, Z$ regions as shown in $figure$  \ref{parrallelogram}.
If a point is placed in the $O$ region it forms an empty convex 5-gon with the four points of the parallelogram.
 So the only feasible regions where a point is placed  are the $I$ regions and $Z$ region.

\begin{figure}[ht]
\begin{minipage}[b]{0.5\linewidth}
  \centering
   {\includegraphics[width=0.65\textwidth]{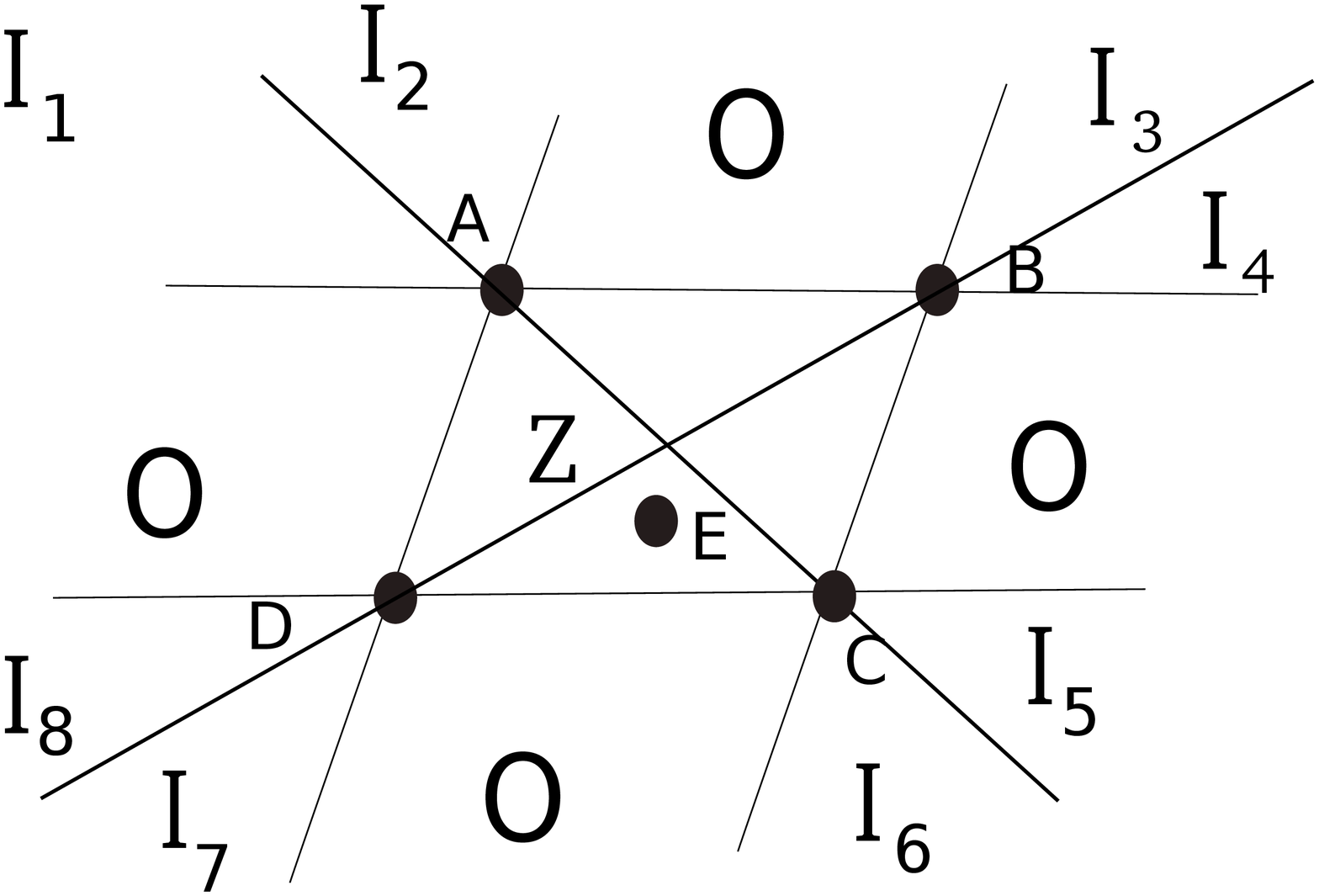}} 
   \hspace{0.18\textwidth}
   \caption{Configuration 5.1}
\label{config5.1}
\end{minipage}
\begin{minipage}[b]{0.5\linewidth}
  \centering
   {\includegraphics[width=0.67\textwidth]{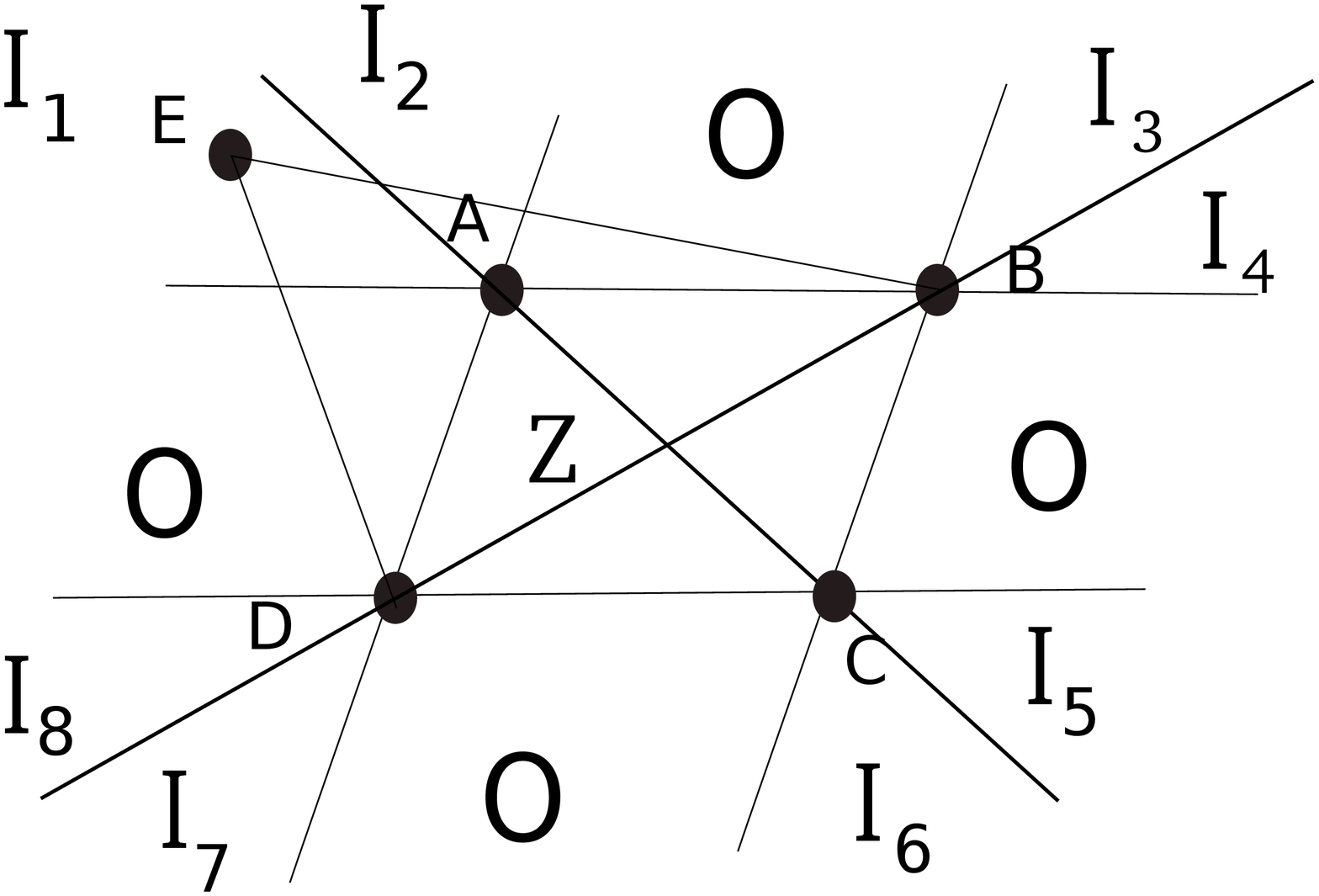}} 
   \hspace{0.3\textwidth}
   \caption{Configuration 5.2}
\label{config5.2}
\end{minipage}
\end{figure}

If the 5th point, say $E$, is placed in the interior of the parallelogram i.e., $Z$ region, the resultant point set forms configuration 5.1 (see $figure$ \ref{config5.1}).
If $E$ is placed in $I$ region the resultant point set forms configuration 5.2 (see $figure$ \ref{config5.2}).

\begin{figure}[ht]
\begin{minipage}[b]{0.5\linewidth}
  \centering
   {\includegraphics[width=0.67\textwidth]{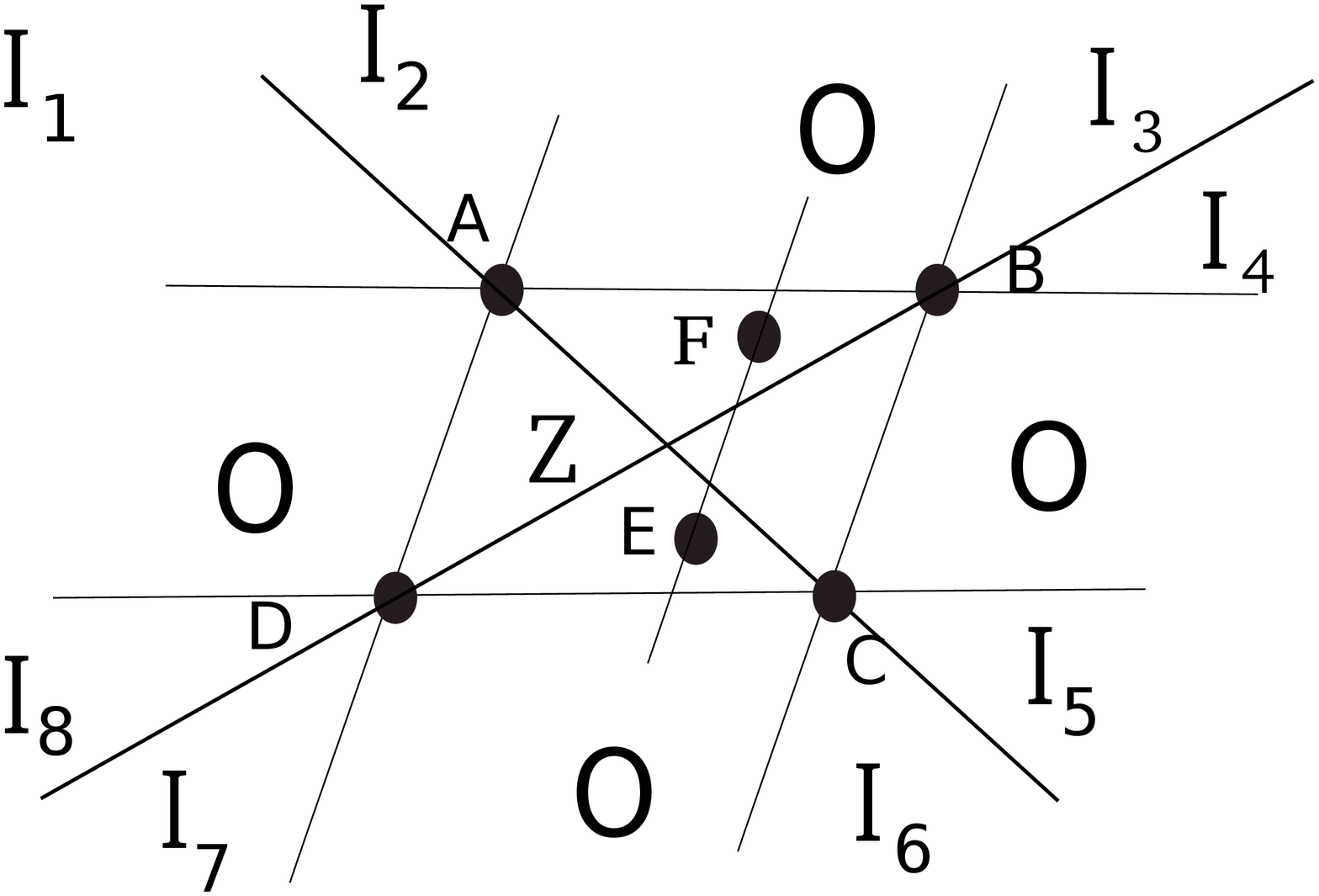}} 
   \hspace{0.3\textwidth}
   \caption{Configuration 6.1}
\label{config6.1}
\end{minipage}
\begin{minipage}[b]{0.5\linewidth}
  \centering
   {\includegraphics[width=0.68\textwidth]{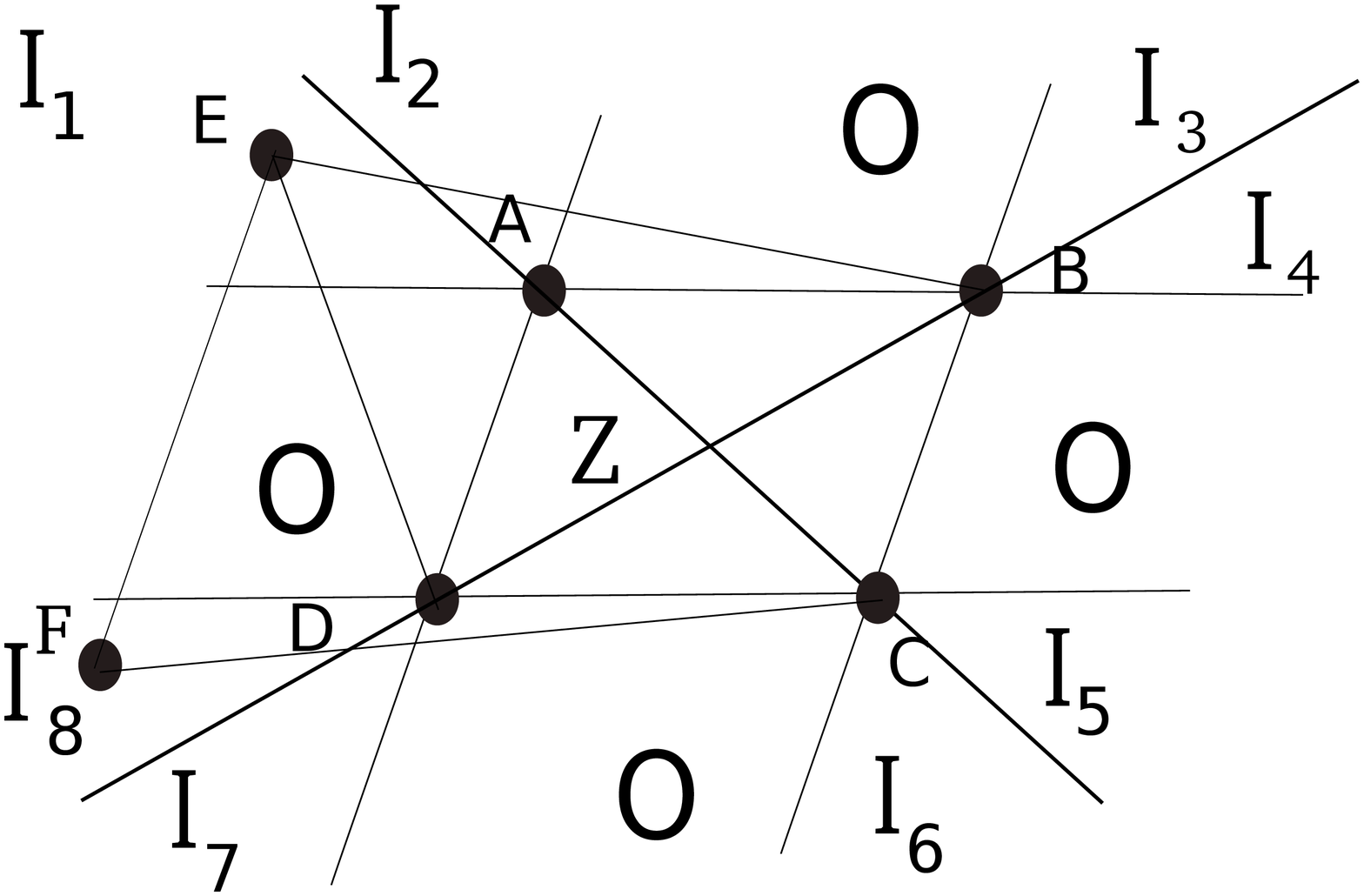}} 
   \hspace{0.3\textwidth}
   \caption{Configuration 6.2}
\label{config6.2}
\end{minipage}
\end{figure}

In the 6th step we show a feasible region in configuration 5.1 and configuration 5.2  where a point is added 
such that the resultant point set formed is either configuration 6.1 or configuration 6.2.

If the 5 point set formed is  configuration 5.1, then the 6th point, say $F$ is placed 
in a triangle (formed by the diagonals of the parallelogram)
 that is opposite to the triangle that has $E$ such that $EF$ is parallel to $AD$ and this forms configuration 6.1
(see $figure$ \ref{config6.1}).

Now we argue for configuration 5.2. Let us suppose that $E$ has been placed in $I_{1}$ region (see $figure$ \ref{config5.2}). 
The 6th point, say $F$, is placed in $I_{8}$ region  such 
that $EF$ is parallel to $AD$ (see $figure$  \ref{config6.2}). Let us call this as $\left\{ I_{1},I_{8}\right\}$  
point placement. By symmetry $ \left\{ I_{2},I_{3}\right\} ,\left\{ I_{4},I_{5}\right\} ,\left\{ I_{6},I_{7}\right\}$
 point placements are similar and they form configuration 6.2.
\end{proof}

\begin{lem}
\label{lem2}
  Any point added to either configuration 6.1 or configuration 6.2 without forming an convex 5-gon or an empty convex 5-gon,  
results in either configuration 7.1 or configuration 7.2.
\end{lem}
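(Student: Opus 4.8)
The plan is to treat configuration 6.1 and configuration 6.2 separately, and in each case to decompose the plane into the regions induced by the six points already placed, then discard every region in which the seventh point would complete a convex $5$-gon or empty convex $5$-gon. Since this is an odd step (player 1's move), following the convention of the overview I would argue for the empty convex $5$-gon game, the feasible region for the convex $5$-gon game being a subset of it. For each configuration the surviving placements are cut out by two kinds of constraint: the point must avoid the $O$ regions of the hull $4$-gon (a point there completes a convex $5$-gon with the four hull vertices), and it must respect the beam capacities stated above, namely that a type $1$ beam may hold at most one point and a type $2$ beam must stay empty. Intersecting these constraints leaves a small collection of feasible regions.

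First I would dispose of configuration 6.1, where the hull is a parallelogram $ABCD$ and the two interior points sit symmetrically in opposite diagonal triangles. A feasible seventh point either lands inside the parallelogram, leaving the hull $\{A,B,C,D\}$ intact and the interior set of size three, or lands in an $I$ region beyond a single vertex, which absorbs exactly that vertex into the interior and promotes the new point to the hull. Because a parallelogram is \emph{fat} --- the triangle spanned by the new point and two hull vertices cannot contain the remaining two --- no placement can absorb two vertices at once. In both surviving cases the type is $(4,3)$, and after checking via the beam observations that no empty convex $5$-gon appears, the result is configuration 7.2.

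The trapezoid case, configuration 6.2, is where the type $(3,4)$ outcome can occur. In addition to the interior and single-vertex-absorbing placements, which again give a $(4,3)$ set and hence configuration 7.2, a point placed far beyond the short parallel side forms, with the two long-side vertices, a triangle that contains the two short-side vertices; these two short-side vertices then drop into the interior together, so the resulting set has a triangular hull and four interior points, i.e.\ type $(3,4)$. Here I would use the symmetric placement prescribed in the construction of configuration 6.2 to verify that the four interior points are in convex position and in fact form the required parallelogram, and that the three hull points fall into three distinct $I$ regions of that parallelogram (any two in the same $I$ region, or a hull point in an $O$ region, would have completed a convex $5$-gon and so is excluded). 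This yields configuration 7.1.

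I expect the main obstacle to be the region enumeration itself: establishing that the convex $5$-gon constraints coming from four-point subsets that include the two interior points --- encoded by the type $1$ and type $2$ beams emanating from the interior points and the hull edges --- cut the plane exactly into the feasible pieces described above, with no stray feasible region that escapes both configuration 7.1 and configuration 7.2. Verifying the parallelogram structure of the four interior points in the $(3,4)$ case, which relies on the exact symmetric coordinates fixed in configuration 6.2, is the delicate quantitative step; the remainder is the orientation bookkeeping that decides, for each feasible region, how many hull vertices are absorbed.
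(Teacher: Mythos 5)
Your framework---classify each feasible placement of the seventh point by how many hull vertices it absorbs, then read off the convex-layer type of the result---is essentially the paper's own region analysis, and your purely geometric observations are correct: a parallelogram has no $S$ region (so configuration 6.1 can only produce type $(4,3)$), and the trapezoid of configuration 6.2 has exactly one $S$ region, beyond its short parallel side, which is the only source of the type $(3,4)$ outcome.

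The gap is in your first constraint, and it is not a technicality. You rightly choose to argue the empty convex 5-gon game (its feasible region contains that of the convex 5-gon game), but you then exclude the $O$ regions of the hull 4-gon on the grounds that a point there ``completes a convex 5-gon with the four hull vertices.'' That 5-gon contains the two interior points $E,F$, so it is \emph{not empty}; in the empty game such a placement is perfectly legal unless some \emph{other} five points form an empty convex 5-gon. If even one point of the hull's $O$ regions were feasible in the empty game, the resulting set would have type $(5,2)$---neither configuration 7.1 nor 7.2---and the lemma would be false. So what you treat as a given constraint is exactly the claim that must be proved, and it is the heart of the paper's proof: the paper observes that the empty convex 4-gons of the configuration are precisely the $U(2,2)$ ones ($EFDA$, $EFCB$, $BEDF$, $AECF$ in configuration 6.1; $EFCD$, $ABFE$, $AFCE$, $BEDF$ in configuration 6.2; the $U(4,0)$ and $U(3,1)$ 4-gons are non-empty and impose no constraint), and then verifies that the union of the $O$ regions of these empty 4-gons covers the entire $O$ region of the hull, leaving only the pockets $I^{*}$, $Z^{*}$ (and $S_{1},S_{2}$ in configuration 6.2) feasible. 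Your closing paragraph's ``region enumeration'' gestures at the right beams, but it takes the hull-$O$ exclusion as an input rather than as the covering statement to be verified, so as written the argument does not go through for the empty game. A smaller caution: the parallelogram formed by the four interior points in your $(3,4)$ case does not follow from the symmetry clause in the definition of configuration 6.2 alone (symmetry makes $EF$ parallel to the short side, not equal to it in length); it holds because, in the game, configuration 6.2 is produced by the construction of Lemma \ref{lem1}, where those four points are the vertices of the original configuration-4 parallelogram---so your ``delicate quantitative step'' must appeal to that construction, as you partially anticipate.
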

\begin{proof}
We consider 2 cases corresponding to configuration 6.1 and configuration 6.2. We denote the regions of configuration 6.1 and 6.2  as $O$ region
if it is infeasible (point added in this region forms an empty convex 5-gon) and $I,S,Z$ region if there exists a feasible region in their interior.  
 
\begin{figure}
  \centering
   {\includegraphics[width=0.42\textwidth]{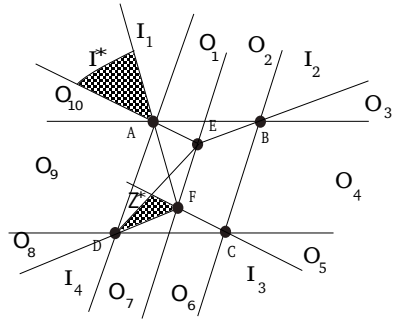}} 
   \hspace{0.1\textwidth}
   \caption{Configuration 6.1 divided into regions}
\label{p4}
\end{figure}

\textbf{Case 1 (configuration 6.1):} Let us consider all the empty convex 4-gons of configuration 6.1 (see $figure$ \ref{p4}). 
The regions that are not covered by the $O$ regions of the empty convex 4-gons are precisely the regions where a point can be
 added without forming a empty convex 5-gon.

The convex 4-gons that are formed by $U(4,0),U(3,1)$ of configuration 6.1 are not empty, so they are not considered.
 The empty convex 4-gons are of the form  $U(2,2)$ of configuration 6.1. 

\textit{U(2,2) of configuration 6.1:}
$EFDA, EFCB, BEDF, AECF$ are the empty convex 4-gons. Consider $EFDA$ empty convex 4-gon. The regions that are covered by the $O$ regions of $EFDA$ are $O_{1},O_{7},O_{8},O_{9},O_{10}$. 
 Similarly the regions that are covered by the $O$ regions of $EFCB$ empty convex 4-gon are $O_{2},O_{3},O_{4},O_{5},O_{6}$. 
  The $O$ regions of $BEDF, AECF$ empty convex 4-gons does not cover the entire region of $Z$ and $I_{k}$ where $k=1,2,3,4$.
Thus, the only feasible regions of configuration 6.1 are portions of $I$ and $Z$ region.
 From $figure$ \ref{p4}, we can verify that 
region $I^{*}$ of $I_{1}$ and $Z^{*}$ of $Z$ is not covered by any of the $O$ regions of
the empty convex 4-gons. Figure \ref{p4} shows $I^{*}$ only in one  $I$ region and $Z^{*}$ in $Z$ region.
Symmetrically, there are feasible regions $I^{*}$ in the other $I$ regions and $Z^{*}$ in $Z$ region. Any point added in $I^{*}$ or $Z^{*}$ region will result in configuration 7.2.
\newline 
\textbf{Case 2 (configuration 6.2):} Let us consider the empty convex 4-gons of configuration 6.2 (see $figure$ \ref{1o}). The convex 4-gons that are formed by $U(4,0),U(3,1)$ of configuration 6.2 are not empty, so they are not considered.
 The empty convex 4-gons are of the form  $U(2,2)$ of configuration 6.2 (see $figure$ \ref{1o}).
 \begin{figure}
  \centering
   {\includegraphics[width=0.6\textwidth]{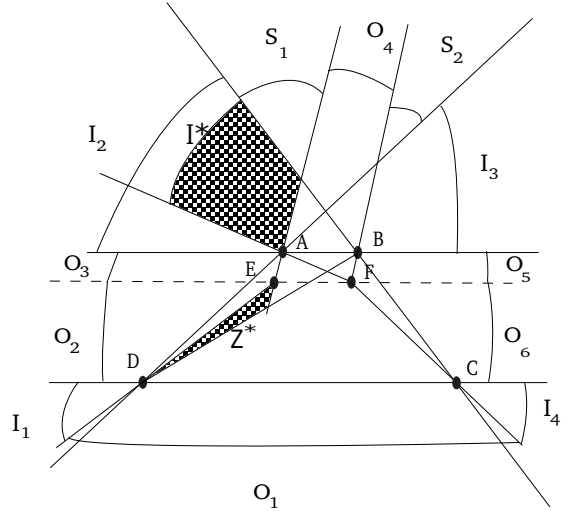}} 
   \hspace{0.1\textwidth}
   \caption{Configuration 6.2 divided into regions}
\label{1o}
\end{figure}

\textit{U(2,2) of configuration 6.2:}
$EFCD, ABFE, AFCE, BEDF$ are the empty convex 4-gons.
Consider $EFCD$ empty convex 4-gon. The regions that are completely covered by the $O$ regions of $EFCD$ are 
$O_{2},O_{6},O_{1}$. 
 Similarly the regions covered by the $O$ regions of $ABFE$ empty convex 4-gon are $O_{3},O_{5},O_{4}$.
The O-regions of  $AFCE$,$BEDF$ empty convex 4-gons does not cover the entire regions of  $Z$ and $I_{k}$ where $k=1,2,3,4$.
Thus, the only feasible regions of configuration 6.2 is $S_{1},S_{2}$ and portions of $I$ and $Z$ regions. 
From $figure$ \ref{1o}, we can verify that region $I^{*}$ of $I_{2}$  is not covered by any of the $O$ regions of these convex 4-gons.
Figure \ref{1o} shows $I^{*}$ only in one  $I$ region and $Z^{*}$ in $Z$ region.
Symmetrically, there are feasible regions $I^{*}$ in the other $I$ regions and $Z^{*}$ in $Z$ region.

If the point is placed in $I^{*}$ of $I_{k}$ where $k=1,2,3,4$ or $Z^{*}$ of $Z$,
 the resultant point set is configuration 7.2. 
If the point is placed in $S_{1}$ or $S_{2}$ the resultant point set is configuration 7.1.
\end{proof}

Now we prove that given a set of 7 points in configuration 7.1 or 7.2 there exists a feasible region
  where a point is added such that the resultant point set is configuration 8.

\begin{lem}
\label{le3}
There exists a feasible region in configuration 7.1 and 7.2 such that a point added in the feasible regions results in configuration 8.
\end{lem}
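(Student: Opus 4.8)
The plan is to show that both Configuration~7.1 and Configuration~7.2 admit a feasible placement region yielding Configuration~8, handling each configuration separately since they have different combinatorial structure. The overall target, Configuration~8, is an $8$-point set of $type(4,4)$ in which each of the $4$ outer-layer points lies in a distinct $I$ region of the inner convex $4$-gon, and (by the note following Definition~\ref{config8}) contains no empty convex $5$-gon. So in each case I would add the $8$th point to produce exactly this combinatorial type while certifying that no empty (or, for the non-empty subproof, no) convex $5$-gon is created.

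First I would treat Configuration~7.1, which is a $type(3,4)$ set whose inner layer is a parallelogram and whose $3$ outer points sit in three distinct $I$ regions of that parallelogram. Here the natural move is to place the $8$th point in the one remaining, currently-empty $I$ region of the inner parallelogram, so that the outer layer grows from $3$ to $4$ points, one per $I$ region, yielding precisely $type(4,4)$ with the distinctness property of Configuration~8. The work is to exhibit a concrete subregion of that fourth $I$ region that is simultaneously (i) outside every $O$ region of every empty convex $4$-gon of the current $7$ points, so no empty convex $5$-gon appears, and (ii) placed so the new point becomes a vertex of the outer layer rather than sinking into the second layer. As in the proof of Lemma~\ref{lem2}, I would enumerate the relevant empty convex $4$-gons (those of the form $U(2,2)$, since the $U(4,0),U(3,1)$ quadrilaterals are non-empty), argue their $O$ regions do not cover the target $I$ region, and point to the uncovered feasible patch via a figure.

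Next I would handle Configuration~7.2, a $type(4,3)$ set defined precisely as having no empty convex $5$-gon. Since its convex-layer type already has $4$ outer points, the goal is to add a $4$th inner point so the type becomes $(4,4)$ and the four outer points each land in a distinct $I$ region of the new inner $4$-gon. The strategy is to place the $8$th point inside the current outer $4$-gon, in a position that (a) forms a convex $4$-gon with the three existing inner points, (b) is arranged so each outer vertex sees a distinct $I$ region of this inner quadrilateral, and (c) avoids all $O$ regions of the empty convex $4$-gons of the $7$-point set, so no empty convex $5$-gon is born. Again I would reduce to checking the $U(2,2)$ empty quadrilaterals and identify the surviving feasible region inside the hull.

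The main obstacle I expect is the case analysis for Configuration~7.2: unlike 7.1 it is specified only by the negative property ``contains no empty convex $5$-gon,'' so rather than a single geometric picture there is a family of admissible $7.2$ configurations, and I must argue a feasible $8$th placement exists \emph{for every} such configuration while guaranteeing the outcome is Configuration~8 with the one-point-per-$I$-region condition. Establishing that the uncovered feasible region is always nonempty, and that a point there both completes a $type(4,4)$ inner quadrilateral and keeps the four outer points in four distinct $I$ regions, is the delicate part; I would lean on the beam observations at the start of Section~3.1 (type~1 beam holds at most one point, type~2 beam none) to bound where inner and outer points can lie and thereby pin down the feasible patch, closing the argument with the accompanying figure.
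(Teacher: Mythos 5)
Your overall plan---fill the missing $I$ region to grow the outer layer of configuration 7.1, add a fourth inner-layer point to configuration 7.2, and certify feasibility by avoiding $O$ regions of convex $4$-gons---has the same skeleton as the paper's proof, but the feasibility certification has genuine gaps. First, you restrict attention throughout to \emph{empty} convex $4$-gons. That only rules out empty convex $5$-gons. Lemma~\ref{le3} is invoked in both theorems, and the $8$th step is player 2's move: in the convex $5$-gon game her placement must not create \emph{any} convex $5$-gon, empty or not, so the feasible region must avoid the $O$ regions of \emph{all} convex $4$-gons. This is exactly why the paper's even-step arguments enumerate all convex $4$-gons and why its overview states that even-step proofs are given for the convex $5$-gon game (a feasible region there is automatically feasible for the empty game, but not conversely). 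Second, your reduction ``to the $U(2,2)$ empty quadrilaterals'' is wrong in both cases. Configuration 7.1 has type $(3,4)$, so there are no $U(4,0)$ or $U(3,1)$ quadrilaterals at all (the three hull points plus any inner point are never in convex position); the quadrilaterals that must be checked are of types $U(0,4)$, $U(1,3)$ and $U(2,2)$, and the paper checks all of them. In configuration 7.2, of type $(4,3)$, only $U(4,0)$ is automatically non-empty; there are empty $U(1,3)$ and $U(3,1)$ quadrilaterals (in the paper's notation, e.g.\ $EACB$, $FABC$, $EHAF$, $EAGH$) whose $O$ regions genuinely constrain the feasible patch, and skipping them leaves the existence of a feasible region unproven.

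Third---and this is the heart of Case 2---you correctly identify that configuration 7.2 is defined only by a negative property, but you do not supply the structural step that makes a uniform argument possible. The paper first proves that each $I$ region and each $O$ region of the inner triangle can hold at most one of the four hull points (two points in one such region force an empty convex $5$-gon), which pins the possibilities down to $(I,I,O,O)$ (the only distribution with no convex $5$-gon at all, relevant to both games) and $(I,O,O,O)$ (which contains a non-empty convex $5$-gon and so can arise only in the empty-pentagon game, where only empty $4$-gons need checking). Only after this classification, plus a further split on the position and orientation of one hull point, does the enumeration of quadrilaterals become a finite, checkable task, with an explicitly located patch $I^{*}$ in each case. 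Your remark that you would ``lean on the beam observations'' does not produce this classification, and without it the claim that a feasible placement exists for \emph{every} configuration 7.2---and moreover leaves the four hull points in four distinct $I$ regions of the new inner quadrilateral---is not established. So the proposal has the right shape but omits the case classification and checks far too few quadrilaterals to certify feasibility.
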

\begin{proof}
We consider 2 cases corresponding to configuration 7.1 and configuration 7.2.
 \begin{figure}
  \centering
   {\includegraphics[width=0.45\textwidth]{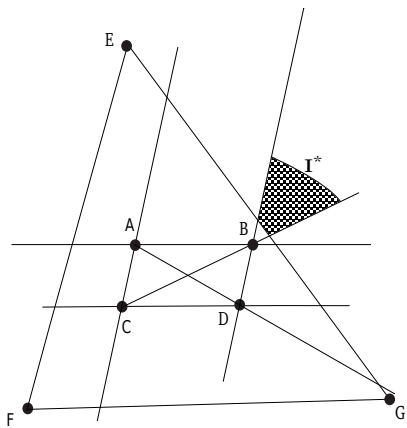}} 
   \hspace{0.1\textwidth}
   \caption{Configuration 7.1}
\label{lq1}
\end{figure}

\textbf{Case 1 (configuration 7.1):}
 Let us consider all the convex 4-gons of configuration 7.1. The regions that are
 not covered by the $O$ regions of convex 4-gons are precisely the regions where a point is added without forming a convex 5-gon. Consider the region $I^{*}$ as shown in $figure$ \ref{lq1}. We will show that this is a feasible region.
 The convex 4-gons are of the form $U(2,2),U(1,3),U(0,4)$ of configuration 7.1. 

\textit{U(0,4) of configuration 7.1:}
$ABCD$ is the only convex 4-gon of this type and has $I^{*}$ in its $I$ region.

\textit{U(1,3) of configuration 7.1:}
 $EBDA,FDBC,ADGC,ABGC,DCEB,ABDF$ are the convex 4-gons of this type and these have $I^{*}$ in their $I$ regions.

\textit{U(2,2) of configuration 7.1:}
$FCDG,EACF,ABGF,BDFE,ADGF,FCBG$ are the convex 4-gons of this type and these have $I^{*}$ in their $I$ regions.
 Hence $I^{*}$ is feasible region where a point is added and the resultant point set that is formed is configuration 8.
 
\textbf{Case 2 (configuration 7.2)}: Configuration 7.2 is a point set of (4,3) convex layer configuration without a convex 5-gon.   
First, we give the following characterization of configuration 7.2: 
 \begin{figure}
  \centering
   {\includegraphics[width=0.3\textwidth]{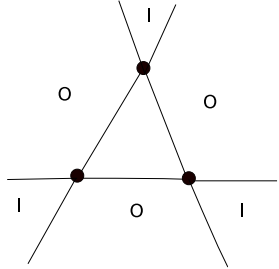}} 
   \hspace{0.1\textwidth}
   \caption{Divison of triangle into regions}
\label{12}
\end{figure}
\textit{An $I$ region or an $O$ region of the inner triangle cannot have more than 1 point (see $figure$ \ref{12})}.  

 Consider an $O$ region. If an $O$ region has 2 points then there is an empty convex 5-gon being formed by these 2 points with the 3 points of the triangle. 
 Hence an $O$ region cannot have more than 1 point.  
Consider an $I$ region. If an $I$ region has 2 points then no other $I$ region or 
the adjacent $O$ regions has any points because the 2 points in this $I$ region along with the third point in the $I$ region
 or in the adjacent $O$ region will form an empty convex 5-gon with the side of the triangle. Thus if an $I$ region has  2 points then the $O$ region that is 
opposite has to have the other 2 points which leads to the formation of an empty convex 5-gon. 

Based on the above constraints, the only 
possible (4,3) convex layer configuration is $(I,I,O,O)$ (2 points each in different $I$ regions, 2 points each in different $O$ regions).

Let us assume that point $G$ lies in region $X_{1}$. The case when the point $G$ is in $X_{2}$ can be argued in a similar fashion (see $figure$ \ref{lq5},\ref{lq7}).
 We have 2 cases corresponding to $EBG$ being a anti-clockwise turn or clockwise turn.

Type 1 (when $G$ lies in $X_{1}$ and $EBG$ is a anti-clockwise turn):
 We will show that $I^{*}$ (triangle region bounded  by the  $CG$,$FA$ and $EA$) is a feasible region (see $figure$ \ref{lq4}).
Consider the  convex 4-gons of this configuration.
The convex 4-gons are either of the type $U(2,2)$ ,$U(1,3)$ or $U(3,1)$.
\begin{figure}
  \centering
   {\includegraphics[width=0.53\textwidth]{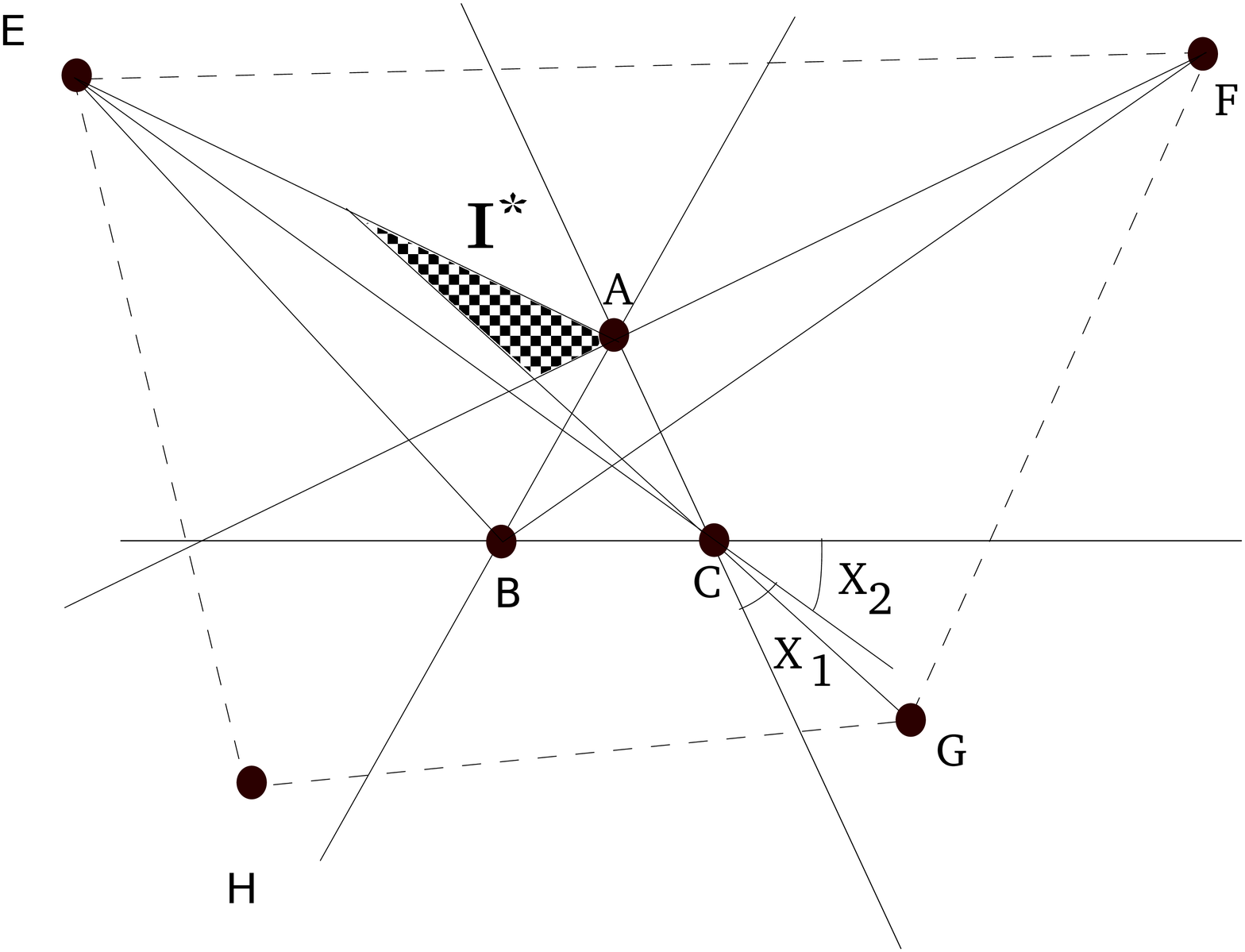}} 
   \hspace{0.1\textwidth}
    \caption{Configuration 7.2 (I,I,O,O) G is in X1 and EBG is anticlockwise turn}
\label{lq4}
\end{figure}

\textit{U(1,3) of Type1 Configuration 7.2 (I,I,O,O):}
$EACB,FABC$ are the 2 convex 4-gons of this type. $I^{*}$ region is in the interior of $EACB$ and it is in the $I$ region of $FABC$.
Thus $I^{*}$ is feasible for these convex 4-gons.  

\textit{U(2,2) of Type1 Configuration 7.2 (I,I,O,O):}
$EABH,FACG,ECGB,EACH$ $,FABG,BCFE,AHCF,EBGA$ and $HBFC$ or $AFBH$ based upon the position of $H$ in the $I$ region are the convex 4-gons of this type. It
is easy to see that 
 $I^{*}$ region is in the interior of $EABH,EACH,BCFE,EBGA$ and $I^{*}$ region is in the $I$ region of $FACG,ECGB,FABG,AHCF$. If $FAH$ is an anti clockwise turn and 
$FCH$ is clockwise turn,  then either $HBFC$ or $AFBH$ is the  convex 4-gon formed based upon the position of $H$ in the 
$I$ region of the triangle and both these  convex 4-gons have $I^{*}$ in their $I$ region. Thus $I^{*}$ is feasible for these convex 4-gons. 

\textit{U(3,1) of Type1 Configuration 7.2 (I,I,O,O):}
$EAGH,FAHG ,FBHG,ECGH,$ $EHCF,EBGF$ are the convex 4-gons of this type and these 4-gons have $I^{*}$ either in their $I$ region or in their $Z$ region.
If $FAH$ is an clockwise turn  then $EHAF$ is a convex 4-gon of this type having  $I^{*}$ region inside.
If $FAH$ is an anti clockwise turn and $FCH$ is anticlockwise turn then $FCHG$ is the  convex 4-gon is of this type and has $I^{*}$ in its $I$ region.
Thus $I^{*}$ is feasible for these convex 4-gons.                                                                                                     
Any point added to $I^{*}$ results in configuration 8.

Type 2 (when $G$ lies in $X_{1}$ and $EBG$ is a clockwise turn):
Consider the  convex 4-gons of this configuration.
The convex 4-gons are either of the type $U(2,2)$ ,$U(1,3)$ or $U(3,1)$ (see $figure$ \ref{lq6}).

\textit{U(3,1) of Type2 Configuration 7.2 (I,I,O,O):}  The convex 4-gons are the same as U(3,1) of Type1 Configuration 7.2 (I,I,O,O) 
except that we have $EBGH$ instead of $EBGF$. $EBGH$ has $I^{*}$  in its $I$ region.
 
\textit{U(1,3) of Type2 Configuration 7.2 (I,I,O,O):}
This is same as U(1,3) of Type1 Configuration 7.2 (I,I,O,O).

\textit{U(2,2) of Type2 Configuration 7.2 (I,I,O,O):}
This is a subset of U(2,2) of Type1 Configuration 7.2 (I,I,O,O).

For the empty convex 5-gon game, the following is a valid (4,3) configuration  containing a non empty convex 5-gon.
\begin{figure}
  \centering
   {\includegraphics[width=0.53\textwidth]{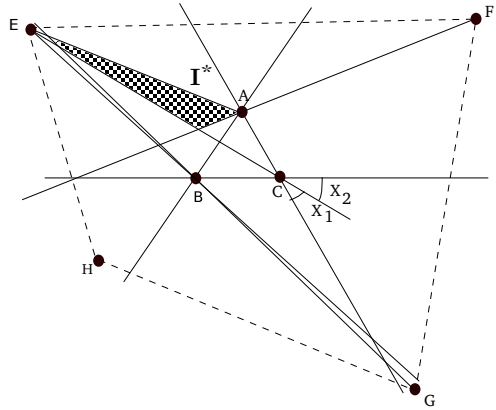}} 
   \hspace{0.1\textwidth}
   \caption{Configuration 7.2 (I,I,O,O) G is in X1 and EBG is clockwise turn}
\label{lq6}
\end{figure}

\begin{figure}
  \centering
   {\includegraphics[width=0.53\textwidth]{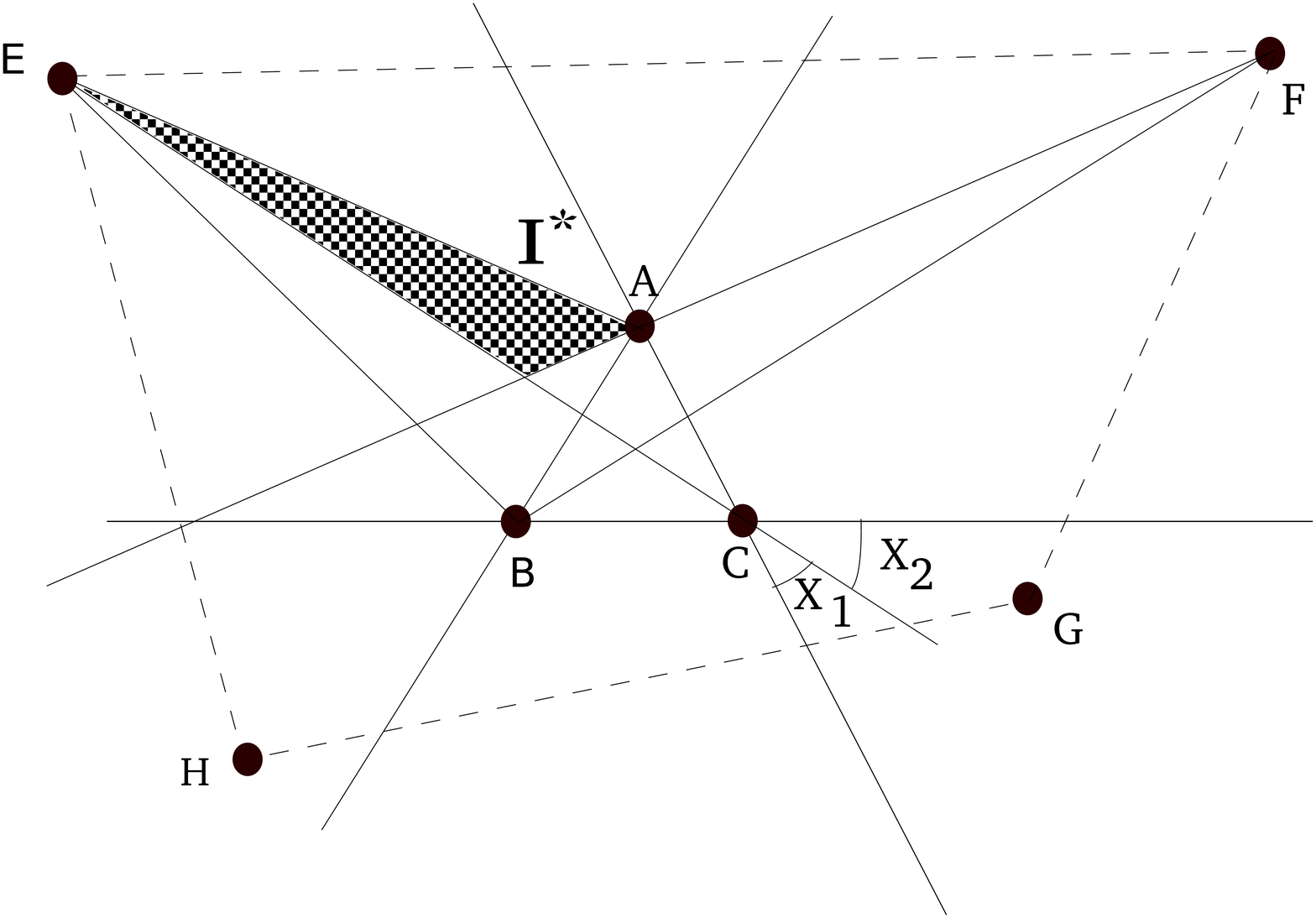}} 
   \hspace{0.1\textwidth}
   \caption{Configuration 7.2 (I,I,O,O) G is in X2 and EAG is an clockwise turn}
\label{lq5}
\end{figure}
\begin{figure}
  \centering
   {\includegraphics[width=0.53\textwidth]{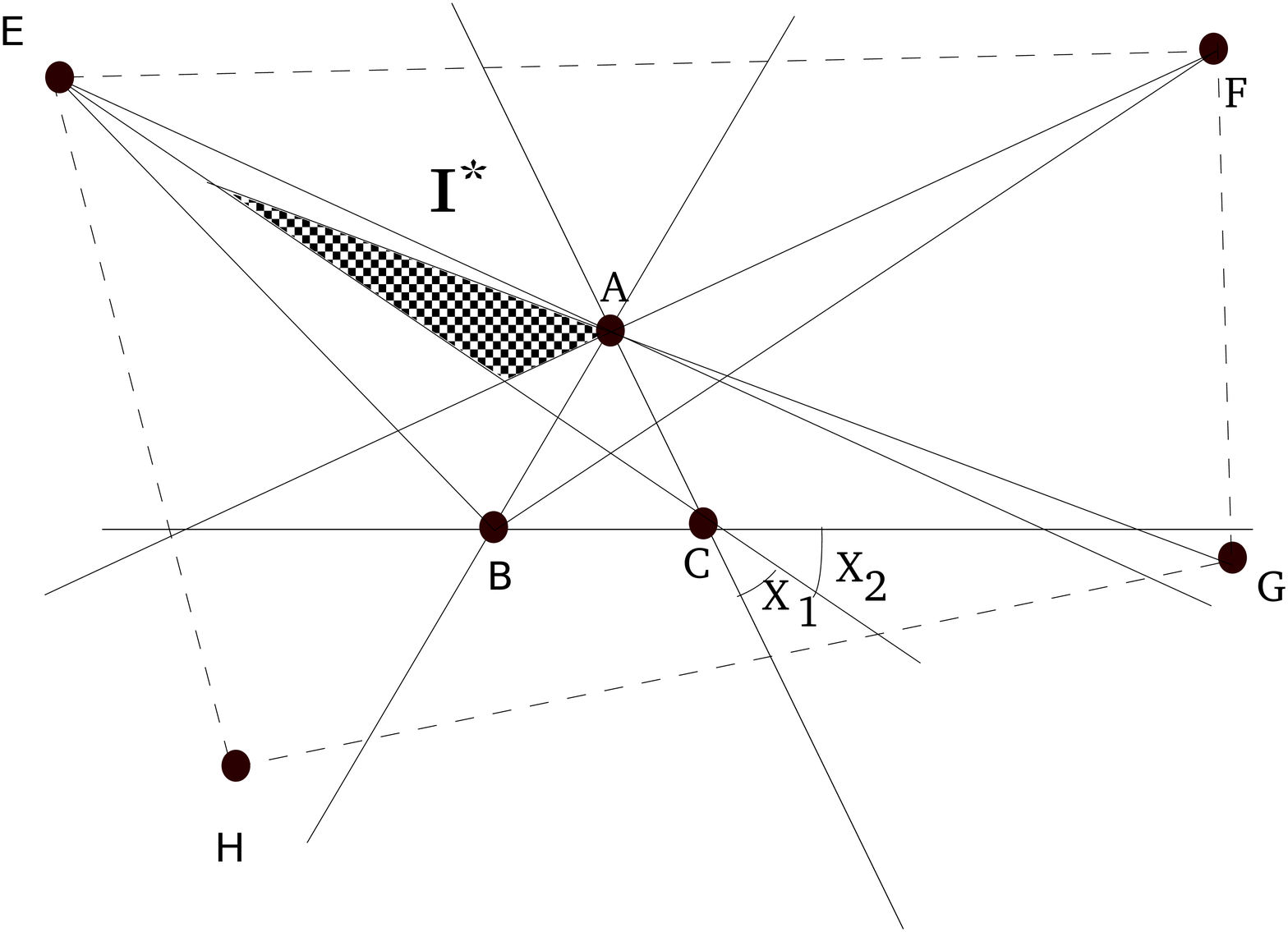}} 
   \hspace{0.1\textwidth}
   \caption{Configuration7.2 (I,I,O,O) G is in X2 and EAG is anticlockwise turn}
\label{lq7}
\end{figure}
(I,O,O,O) (3 points each in different $O$ regions, 1 point in a $I$ region): Let us consider all the empty convex 4-gons of 
this configuration (see $figure$ \ref{label13}).
 The regions that are not covered by the $O$ regions of empty convex 4-gons are precisely the regions where a point is added without forming an empty convex 5-gon. 
We will show that there exists a region $I^{*}$ in these feasible regions where a point is placed forming configuration 8. 
The convex 4-gons that are formed by $U(4,0)$ of configuration 7.2 are not empty, so they are not considered.
 The remaining empty convex 4-gons are of the form  $U(2,2),U(1,3),U(3,1)$ of configuration 7.2. 
It is important to note that $GHE$ should be an anticlockwise turn and $GFE$ a clockwise turn, otherwise we do not have convex 4-gon in the first convex layer.

\begin{figure}
  \centering
   {\includegraphics[width=0.47\textwidth]{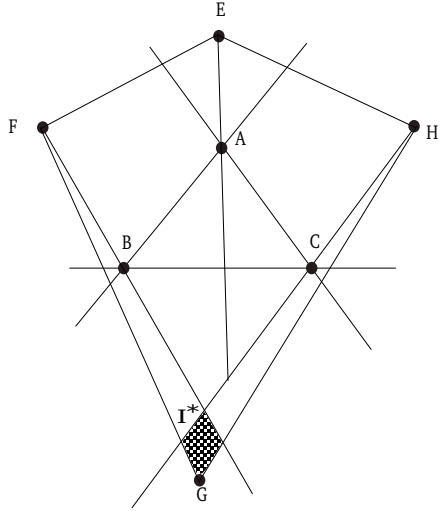}} 
   \hspace{0.2\textwidth}
   \caption{Configuration 7.2 (I,O,O,O)}
    \label{label13}  
\end{figure}
\textit{U(3,1) of configuration 7.2 (I,O,O,O):}
$EHAF$ is the empty convex 4-gon of this type and it has $I^{*}$ in its $I$ region. Thus $I^{*}$ is feasible for these convex 4-gons. 

\textit{U(1,3) of configuration 7.2 (I,O,O,O):}
$FACB,HABC,ACGB$ are the empty convex 4-gons
 of this type. $FACB,HABC$ have $I^{*}$ in there $I$ regions and $ACGB$ 
has $I^{*}$ inside. Thus $I^{*}$ is feasible for these convex 4-gons. 

\textit{U(2,2) of configuration 7.2 (I,O,O,O):}
$EABF,EHCA$ are the empty convex 4-gons of this type and these have $I^{*}$ in there $I$ regions. Thus $I^{*}$ is feasible for these convex 4-gons. 
Any point added to $I^{*}$ results in configuration 8.

\end{proof}
We call a point set as bad configuration for the convex 5-gon game if the point set has no convex 5-gon and any point added to the point set forms an convex 5-gon. 
Similarly, we can define bad configuration for the empty convex 5-gon game. Note that the game ends in the $i$th step only if the point set reached in 
the $i-1$th step is a bad configuration.   

We now argue that the game will always reach the 9th step, i.e., there is no possibility for the game to end earlier.
 To prove this we show that there does not exist point set
 with $2k$ points where $k=2,3$ that are bad configurations. 
 We do not consider point set with $2k+1$ points, $k=2,3$, because it is player 2's turn to form such a point set 
and  even though there are such point sets which are bad configurations, player 2 is able to avoid them in the game by ensuring that configuration 5 is reached
in the 5th step and configuration 7.1 or configuration 7.2 is reached in the 7th step. By lemma \ref{lem1}, \ref{le3} there exists feasible region to 
place points and hence these configurations are not bad configurations.

It is easy to see that no point set with 4 points are bad configurations. 

\begin{lem}
\label{added}
 There are no point set  with 6 points that are bad configurations. 
\end{lem}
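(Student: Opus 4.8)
The plan is to show that every $6$-point set $P$ in general position with no convex $5$-gon admits a legal extension, i.e.\ a point whose addition still leaves no convex $5$-gon; the statement for the empty game then follows with one extra configuration to check. First I would classify the relevant point sets. Since $5$ or $6$ points on the convex hull immediately give a convex $5$-gon, $CH(P)$ has size $3$ or $4$. If $|CH(P)|=4$ the remaining two points form the second layer, so $P$ is of $type(4,2)$; if $|CH(P)|=3$ the remaining three points are automatically in convex position, so $P$ is of $type(3,3)$ (note that $type(3,2,1)$ is impossible, as three interior points always span a triangle). Moreover, whenever $|CH(P)|\le 4$ no $5$-subset of $P$ is in convex position at all, so for the empty game the only additional case is $|CH(P)|=5$, a convex pentagon with a single interior point.

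For each type I would argue, following the $O$-region framework of Definition~\ref{coongrm}, that the union of the $O$-regions taken over all convex $4$-gons of $P$ (respectively all empty convex $4$-gons) does not cover the plane; any point in the uncovered region is then a safe move and witnesses that $P$ is not a bad configuration. Concretely, as in the proofs of Lemmas~\ref{lem2} and~\ref{le3}, I would enumerate the convex $4$-gons by the $U(i,j)$ notation, record which $O$-regions each covers, and exhibit an explicit surviving region. For $type(4,2)$ this is essentially the situation already handled for configurations $6.1$ and $6.2$ but for a general hull quadrilateral with two interior points: the hull $U(4,0)$ contributes only its four edge-beams to its $O$-region, leaving its vertex-wedges ($I$) and its interior ($Z$) free, and I would use the beam bounds stated before Lemma~\ref{lem1} (a type~$1$ beam holds at most one point, a type~$2$ beam none) to show that a sub-region $Z^{*}$ of the interior near the inner segment survives the $O$-regions of the $U(3,1)$ and $U(2,2)$ quadrilaterals.

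For $type(3,3)$, with outer triangle $ABC$ and inner triangle $DEF$, a new interior point $G$ is automatically safe against every $5$-subset containing all of $D,E,F$ (then $G$ is interior to their hull) and against every subset of the form $\{G\}\cup\{3\text{ vertices of }ABC\}\cup\{1\text{ of }DEF\}$ (two interior points inside triangle $ABC$); the only dangerous subsets are those of the form $\{G\}\cup\{2\text{ vertices of }ABC\}\cup\{2\text{ of }DEF\}$. Placing $G$ in $\mathrm{int}(DEF)$ intersected with the interiors (the $Z$-regions) of all such convex quadrilaterals keeps it out of every $O$-region, so it suffices to verify that this common region is non-empty. For the empty game, any safe move for the convex game is automatically safe, since an empty convex $5$-gon is in particular a convex $5$-gon; hence only the $|CH(P)|=5$ case needs separate treatment. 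There the hypothesis of no empty convex pentagon forces the interior point $X$ to lie in the common kernel of the five quadrilaterals spanned by four of the pentagon vertices (otherwise one of them, together with $X$, would be an empty convex pentagon), and placing the new point next to $X$ inside that kernel leaves every candidate pentagon non-convex or non-empty.

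The main obstacle is the simultaneous non-covering step: verifying that one explicit region escapes the $O$-regions of all the convex $4$-gons at once, of which there are more than a dozen in the $type(4,2)$ and $|CH(P)|=5$ cases. The facts that make this tractable are the beam bounds and the observation that a hull quadrilateral's own $O$-region omits both its interior and its vertex-wedges, so a point placed deep inside the innermost layer is the natural candidate to test against the remaining, fewer, $U(3,1)$ and $U(2,2)$ quadrilaterals.
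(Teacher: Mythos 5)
Your case classification (types $(4,2)$ and $(3,3)$ for the convex game, plus $(5,1)$ for the empty game) and your identification of the dangerous $5$-subsets are correct and match the paper, but the concrete placement claims that actually carry the proof are false, not merely unverified. For type $(3,3)$ you place $G$ in $\mathrm{int}(DEF)$ intersected with the $Z$-regions of the quadrilaterals spanned by two outer and two inner points, remarking that ``it suffices to verify that this common region is non-empty.'' It can be empty, and worse, \emph{all} of $\mathrm{int}(DEF)$ can be infeasible. Take $D=(0,0)$, $E=(1,0)$, $F=(0.5,0.87)$ and outer points $A=(-1,-0.7)$, $B=(2,-0.7)$, $C=(0.5,1.87)$, one in each $I$-region (the paper's $(I,I,I)$ case; it contains no convex $5$-gon). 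The convex quadrilateral $ABED$ lies entirely below the line $DE$, so its interior does not even meet $\mathrm{int}(DEF)$; meanwhile the $O$-region of its edge $ED$, bounded by the lines $AD$ and $BE$ produced into the triangle, covers the part of $\mathrm{int}(DEF)$ below both of those lines, and the analogous $O$-regions of the quadrilaterals $ADFC$ and $BCFE$ (bounded by $AD$ and $CF$, respectively $BE$ and $CF$) cover the rest. Every point of $\mathrm{int}(DEF)$ off the lines $AD$, $BE$, $CF$ (points on them violate general position) thus completes an empty convex $5$-gon such as $ABEGD$. A feasible point must be sought \emph{outside} the inner triangle, e.g.\ just below the midpoint of $DE$, inside $ABED$, where one can check no $5$-subset is in convex position; exhibiting such regions case by case is exactly the content of the paper's proof.

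The $(5,1)$ claim fails the same way. Let the pentagon be regular with $X$ its center, so $X$ lies in your ``kernel.'' For three consecutive vertices, say $A,B,C$, the quadrilateral $ABCX$ is convex, and near $X$ its two $O$-regions attached to the edges $XA$ and $XC$ occupy the two opposite quadrants cut out by the lines $XA$ and $XC$; a point $Y$ in either quadrant yields a convex pentagon on $\{Y,X,A,B,C\}$, which is empty because $D$ and $E$ lie outside it. Ranging over the five consecutive triples, these quadrants cover every direction around $X$ except the ten rays through or opposite a vertex, which are excluded by general position. So every legal point ``next to $X$'' creates an empty convex $5$-gon --- the opposite of what you assert --- and the true feasible region lies far from the configuration, deep in an $I$-region of the pentagon, beyond the reach of the relevant $O$-regions (this is what the paper's figure for the $(5,1)$ case shows). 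A similar caution applies even to your $(4,2)$ region: points immediately on either side of the segment $EF$ lie in the $O$-region of edge $EF$ of the quadrilateral formed by $E,F$ and the two hull vertices on the opposite side, hence are infeasible, so ``near the inner segment'' cannot be taken literally. In short, the skeleton of your argument is the paper's, but in each hard case the exhibited region is wrong, and the lemma is precisely the assertion that a correct region exists.
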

\begin{proof}

When the point set contains 6 points it is easy to see that (3,3) and (4,2) are the only
 valid convex layer configurations for the  convex 5-gon game and (3,3), (4,2), (5,1)
are the valid convex layer configurations for the empty convex 5-gon game.
To show that a (3,3), (4,2) convex layer configurations are not bad configurations we show a feasible region where a point is added without 
forming an convex 5-gon.
To show that a (5,1) convex layer configuration is not a bad configuration we show a feasible region where a point is added without 
forming an empty convex 5-gon.
\begin{figure}[ht]
\begin{minipage}[b]{0.53\linewidth}
  \centering
   {\includegraphics[width=0.5\textwidth]{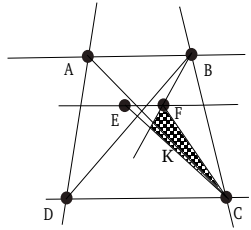}} 
   \hspace{0.1\textwidth}
   \caption{Case 1 of (4,2) configuration}
\label{lastproof21o}
\end{minipage} 
\begin{minipage}[b]{0.53\linewidth}
  \centering
   {\includegraphics[width=0.5\textwidth]{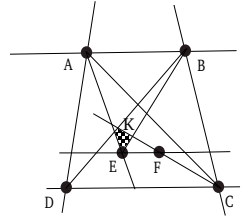}} 
   \hspace{0.1\textwidth}
   \caption{Case 2 of (4,2) configuration}
\label{lastproof21s}
\end{minipage}
\end{figure} 
\begin{figure}
  \centering
   {\includegraphics[width=0.25\textwidth]{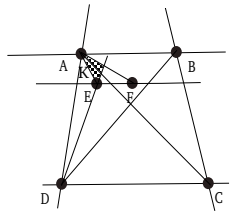}} 
   \hspace{0.1\textwidth}
   \caption{Case 3 of (4,2) configuration}
\label{lastproof21a}
\end{figure} 

\textit{(4,2) Convex layer configuration:} Consider the line joining $EF$  of the second convex layer (see $figure$ \ref{lastproof21o}). This line divides the first convex layer into 2 parts. If either of the parts contains
 3 points of the first convex layer then they form an empty convex 5-gon with the 2 points of the second convex layer.
 So the only other possible option is that both the parts contain 2 points.

Based upon the position of $E,F$ in the diagonal triangles of the first convex layer, (4,2) convex layer configurations are divided into 3 cases.

Case 1:  $E,F$ are in opposite triangles, see $figure$ \ref{lastproof21o}.

Case 2:  $E,F$ are in same triangle, see $figure$ \ref{lastproof21s}.

Case 3:  $E,F$ are in adjacent triangles, see $figure$ \ref{lastproof21a}.

In each of the above cases a feasible region $K$ exists where a point can be added without forming a convex 5-gon (see shaded 
region in $figure$ \ref{lastproof21o}, \ref{lastproof21s}, \ref{lastproof21a}).
Thus (4,2) convex layer configurations are not bad configurations.

\textit{(3,3) Convex layer configuration:}
 We make the following observations on the placement of the three points of the first convex 
layer in the $I$ and $O$ regions of the triangle formed by the second convex layer (see $figure$ \ref{12}). If an $I$ region has 3
 points then the triangle in the second convex layer will not be contained in the triangle of the 
first convex layer. Hence an $I$ region cannot have 3 points. Similarly an $O$ region also cannot have 3 points. If an $O$ region has
 2 points then the opposite $I$ region has to have the third point. In this case, there is an empty convex 5-gon being formed by 
these 2 points with the 3 points of the triangle. Hence an $O$ region cannot have more than 1 point.

\begin{figure}[ht]
\begin{minipage}[b]{0.48\linewidth}
  \centering
   {\includegraphics[width=0.53\textwidth]{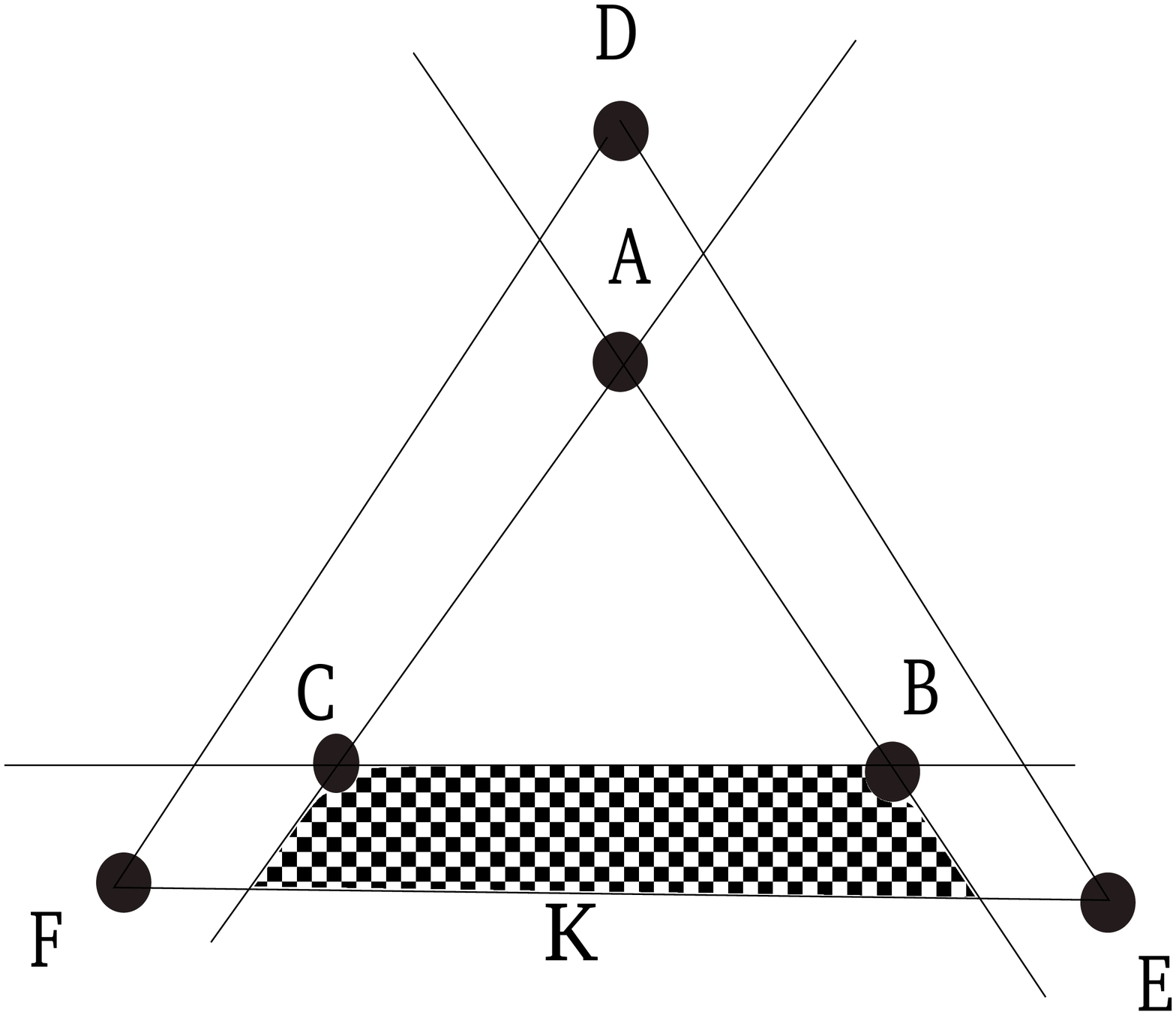}} 
   \hspace{0.1\textwidth}
   \caption{(I,I,I) configuration}
\label{lastproof1}
\end{minipage} 
\begin{minipage}[b]{0.5\linewidth}
  \centering
   {\includegraphics[width=0.6\textwidth]{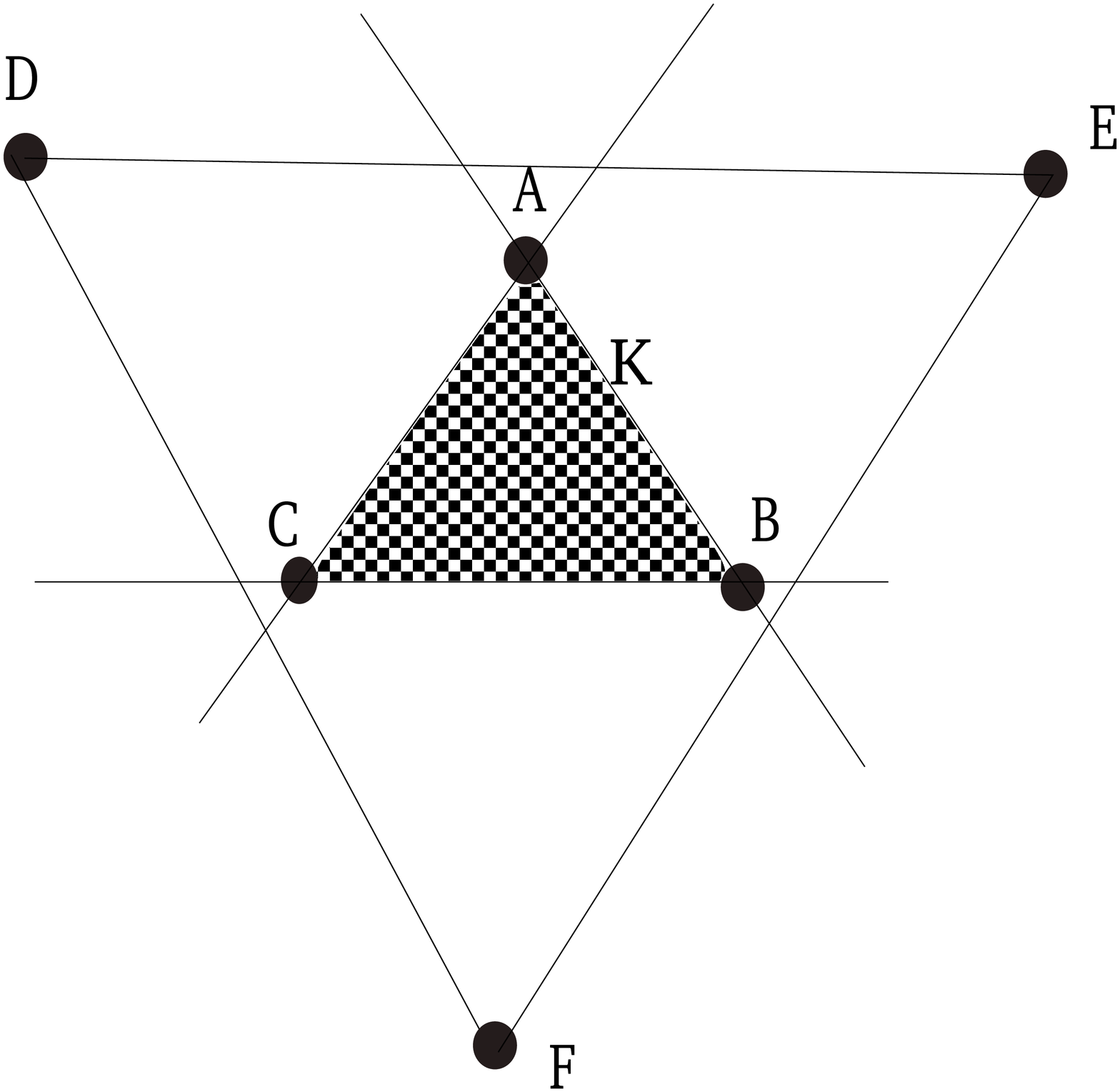}} 
   \hspace{0.1\textwidth}
   \caption{(O,O,O) configuration}
\label{lastproof2}
\end{minipage}
\end{figure} 

\begin{figure}[ht]
\begin{minipage}[b]{0.5\linewidth}
  \centering
   {\includegraphics[width=0.72\textwidth]{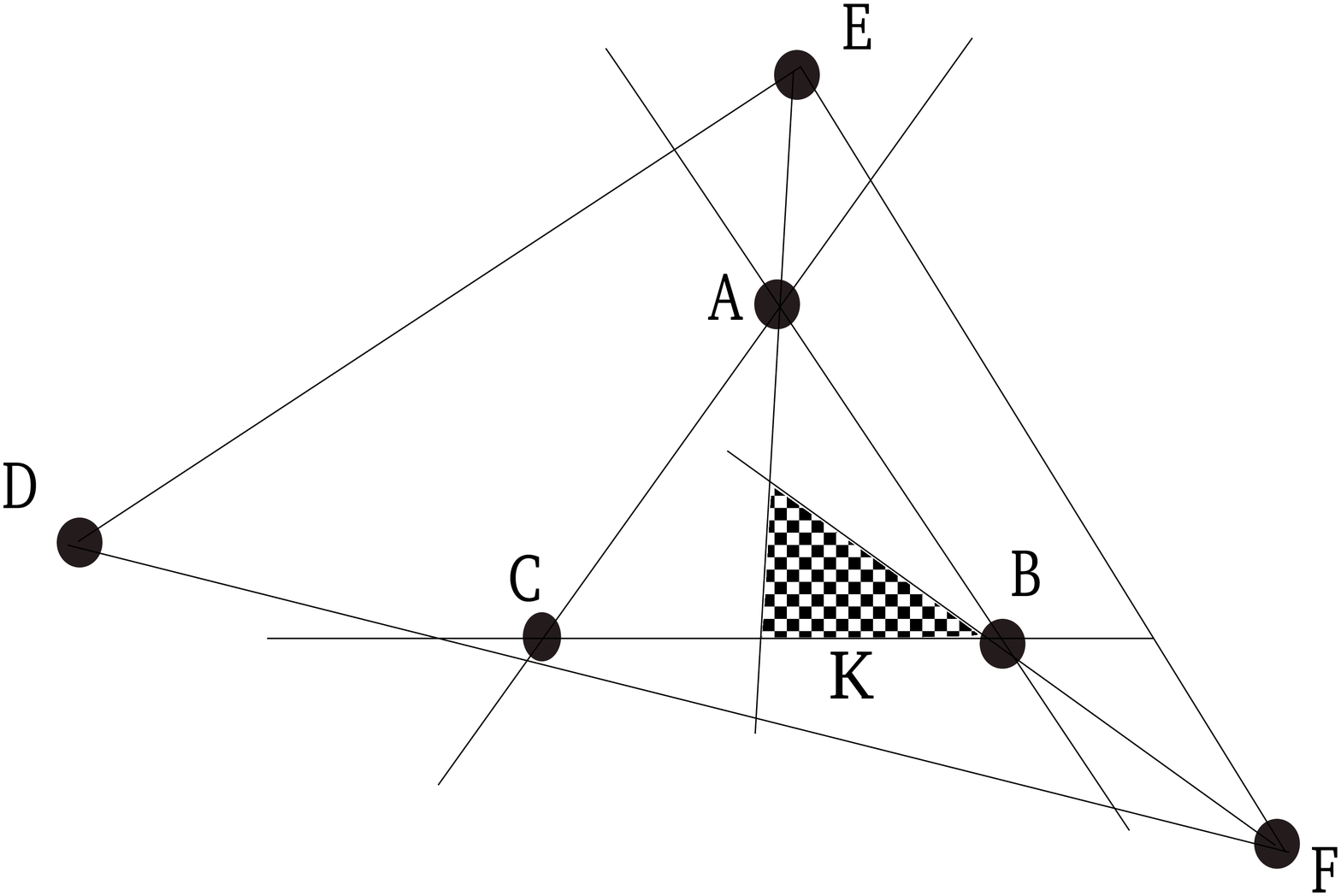}} 
   \hspace{0.1\textwidth}
   \caption{(I,I,O) configuration}
\label{lastproof3}
\end{minipage} 
\begin{minipage}[b]{0.5\linewidth}
  \centering
   {\includegraphics[width=0.72\textwidth]{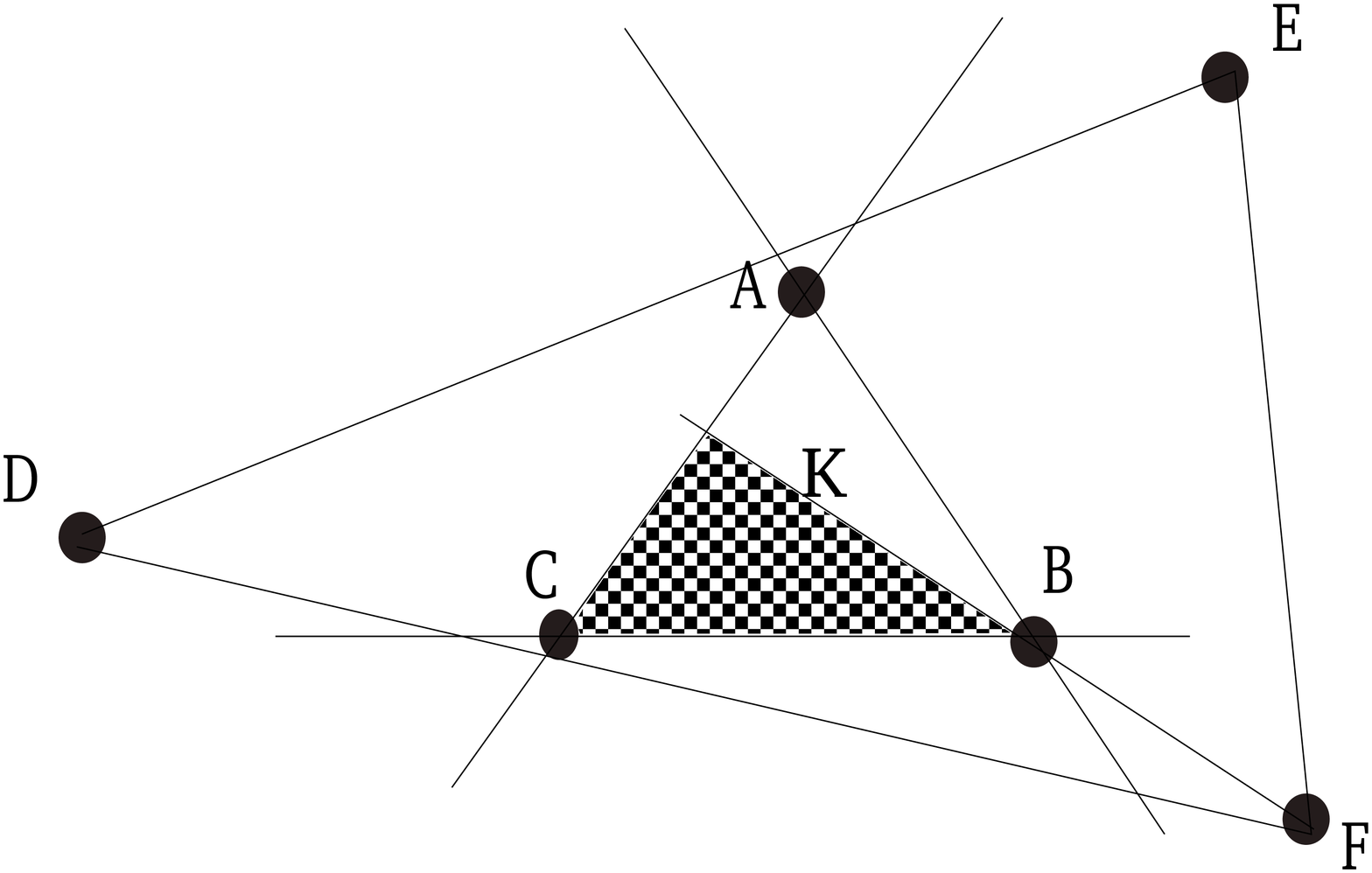}} 
   \hspace{0.1\textwidth}
   \caption{(O,O,I) configuration}
\label{lastproof4}
\end{minipage}
\end{figure}

The only possible (3,3) point configurations are the following:

$(I.I,I)$ (3 points in different $I$ regions) $figure$ \ref{lastproof1}.

$(O,O,O)$ (3 points in different $O$ regions) $figure$ \ref{lastproof2}.

$(I,I,O)$ (2 points are in different $I$ regions and 1 point in an  $O$ region) $figure$ \ref{lastproof3}.

$(O,O,I)$ (2 points are in different $O$ regions and 1 point in an  $I$ region) $figure$ \ref{lastproof4}.

$(2I,O)$ (2 points in an $I$ region and 1 point in the opposite $O$ region) $figure$ \ref{lastproof5}.

In each of the above cases a feasible region $K$ exists where a point can be added without forming a convex 5-gon (see shaded 
region in $figure$ \ref{lastproof1}, \ref{lastproof2}, \ref{lastproof3}, \ref{lastproof4}, \ref{lastproof5}).
Thus (3,3) convex layer configurations are not bad configurations.

\textit{(5,1) Convex layer configuration:}
A feasible region $K$ exists where a point can be added without forming a empty convex 5-gon (see shaded region in $figure$ \ref{lastproof31}). 

\end{proof}

\begin{figure}[ht]
\begin{minipage}[b]{0.49\linewidth}
  \centering
  {\includegraphics[width=0.75\textwidth]{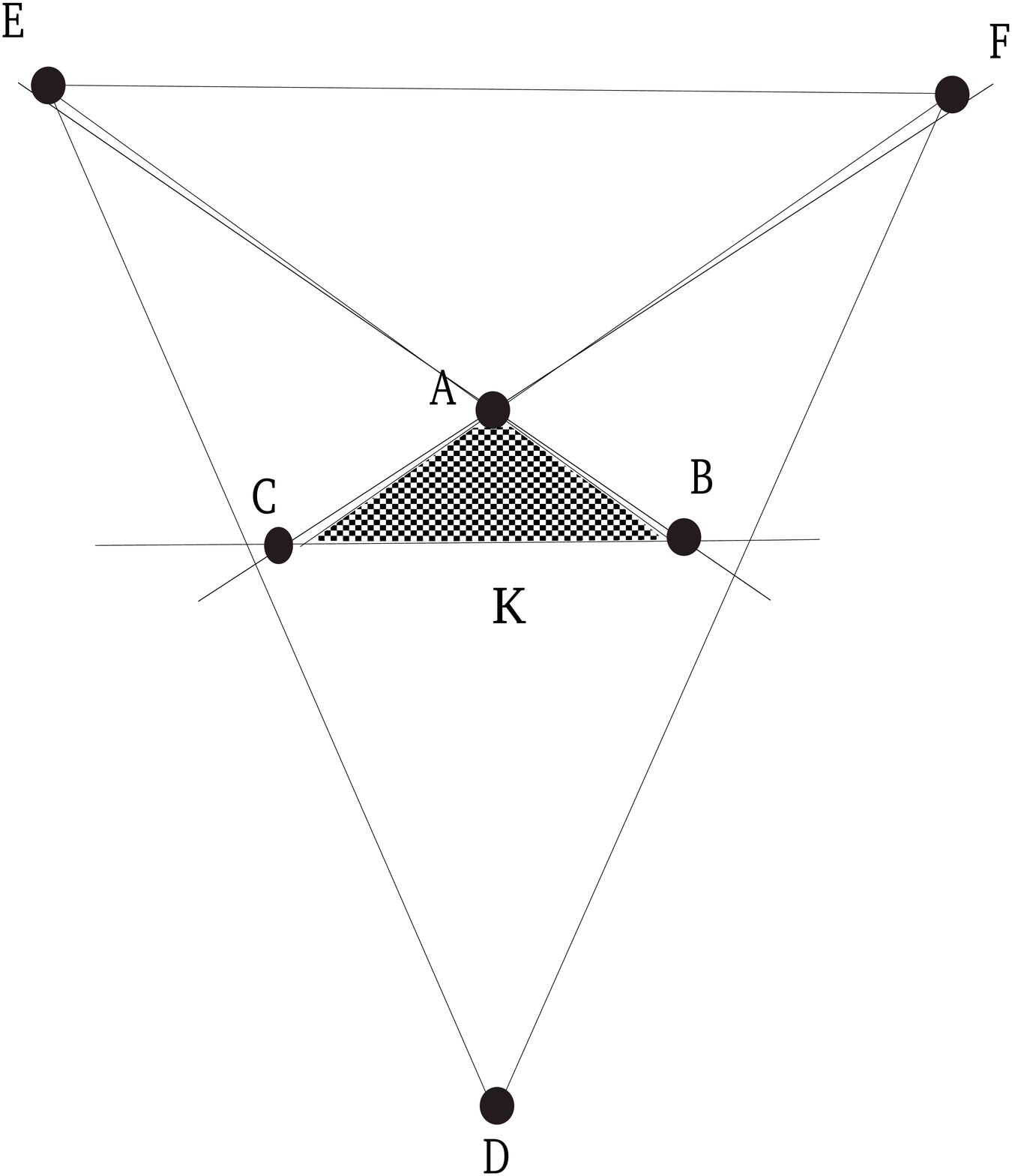}} 
   \hspace{0.1\textwidth}
   \caption{(2I,O) configuration}
\label{lastproof5}
\end{minipage} 
\begin{minipage}[b]{0.51\linewidth}
  \centering
  {\includegraphics[width=1.0\textwidth]{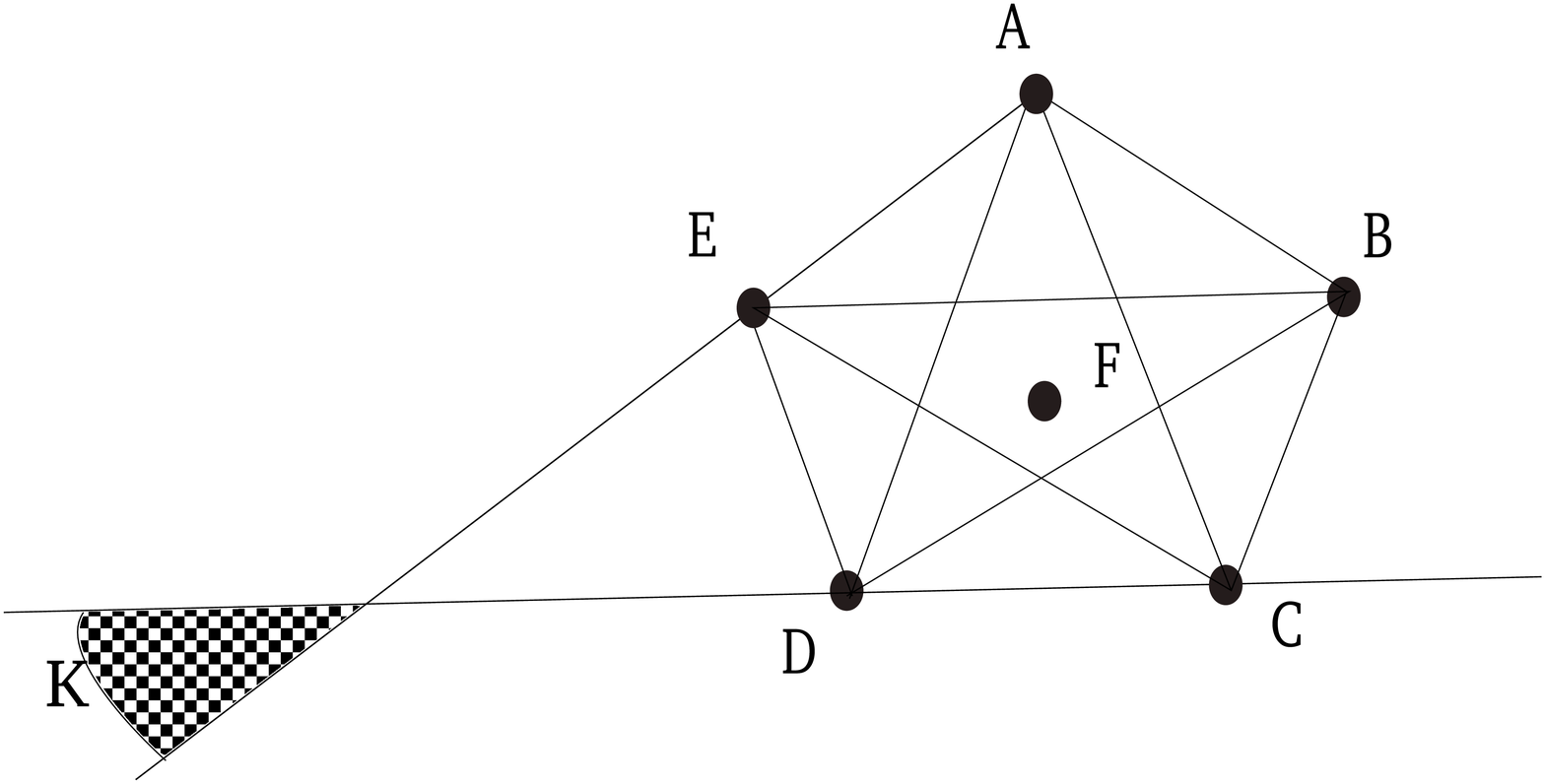}} 
   \hspace{0.1\textwidth}
   \caption{(5,1) configuration}
\label{lastproof31}  
\end{minipage}
\end{figure} 

\begin{thm}
 The convex 5-gon game always ends in the 9th step.
\end{thm}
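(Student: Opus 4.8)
The plan is to chain Lemmas~\ref{lem1}, \ref{lem2} and~\ref{le3} into a single forced line of play and then close the argument at both ends. First I would fix player~2's strategy and trace the position. The opening three points form a triangle; at move~4 player~2 completes a parallelogram (configuration~4). By Lemma~\ref{lem1} the position after move~6 is configuration~6.1 or~6.2, whatever player~1 does at move~5 that does not already end the game. By Lemma~\ref{lem2} a non-terminating move~7 by player~1 produces configuration~7.1 or~7.2, and by Lemma~\ref{le3} player~2's move~8 produces configuration~8. Hence, on any line in which the game has not ended earlier, the position after the 8th point is configuration~8.

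Next I would show the game cannot stop before the 9th move. Recall that the game terminates on step~$i$ only if the position after step~$i-1$ is a bad configuration. On player~1's moves (steps~5 and~7) the preceding positions have $4$ and $6$ points; the $4$-point positions are never bad and, by Lemma~\ref{added}, no $6$-point position is bad, so player~1 can always avoid completing the polygon, and by Lemmas~\ref{lem1} and~\ref{lem2} the resulting positions are configuration~5.1/5.2 and configuration~7.1/7.2. On player~2's moves (steps~6 and~8) player~2 is the mover, and Lemmas~\ref{lem1} and~\ref{le3} exhibit feasible regions inside configurations~5.1, 5.2, 7.1 and~7.2 where the new point creates neither a convex nor an empty convex 5-gon. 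Hence no move before the 9th ends the game, and the position after the 8th move is configuration~8.

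It remains to prove the crux, which I expect to be the main obstacle: configuration~8 is a bad configuration, so that the 9th point necessarily completes a convex 5-gon (and, in the empty variant, an empty convex 5-gon). I would argue this by the region-covering method already used in Lemmas~\ref{lem2} and~\ref{le3}. Let the inner (second) layer of configuration~8 be the convex quadrilateral with the four outer points placed one in each $I$ region around it. Enumerating the convex $4$-gons of the configuration together with their $O$ regions, I would verify that these $O$ regions cover the entire plane, leaving no feasible region; equivalently, every candidate point lies in the $O$ region of some convex $4$-gon and therefore extends it to a convex $5$-gon. The delicate part is the exhaustiveness of this covering: because the four outer points occupy four distinct $I$ regions, the wedges that escape one quadrilateral's $O$ region must be shown to be caught by another, and one must check the finitely many quadrilateral types $U(i,j)$ to see that no gap survives. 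Combining the three paragraphs, player~1 is forced to place the losing 9th point, so the convex 5-gon game always ends on the 9th step, giving $N_{G}(5)=9$ (and, by the parallel argument, $H_{G}(5)=9$).
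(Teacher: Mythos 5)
Your argument for moves 1 through 8 is exactly the paper's: configuration 4 at move 4, Lemma~\ref{lem1} through move 6, Lemma~\ref{lem2} at move 7, Lemma~\ref{le3} at move 8, with the non-existence of bad 4-point sets, Lemma~\ref{added}, and the feasible regions exhibited in Lemmas~\ref{lem1} and~\ref{le3} ruling out an earlier end of the game. The divergence is at the 9th move, and there your proposal has a genuine gap. For the convex 5-gon game the paper never proves that configuration 8 is a bad configuration; it closes the argument in one line by citing the known Erd\H{o}s--Szekeres value $N(5)=9$: the configuration 8 reached in play contains no convex 5-gon, any 9 points in general position contain one, so the 9th point necessarily creates a convex 5-gon and the game ends.

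Your substitute for this citation --- enumerating the convex 4-gons of configuration 8 and verifying that their $O$ regions cover the whole plane --- is a true statement, but in your write-up it is a plan rather than a proof: the exhaustive covering check that you yourself call ``the delicate part'' is exactly where all the work lies, and you do not carry it out. Nor is it a small verification. It is essentially the content of the paper's second theorem (the empty convex 5-gon game), where citing $N(5)=9$ does not suffice (since $H(5)=10$, nine points need not contain an \emph{empty} convex 5-gon); there the paper needs Lemma~\ref{lem4} (a point landing in the $S$ region of the inner quadrilateral yields a $(4,3,2)$ configuration with an empty convex 5-gon), Lemma~\ref{lem5} (the $I$ and $Z$ regions yield a $(4,4,1)$ configuration with an empty convex 5-gon), and Lemma~\ref{lem6} (four type 2 beams cover everything outside the outer quadrilateral), assembled in Lemma~\ref{proof for config8}. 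So either invoke $N(5)=9$ as the paper does, which finishes your proof immediately, or supply that beam analysis; as it stands, the crux of your argument is asserted rather than proven.
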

\begin{proof} The game does not end in the 5th step because no 4 point sets  are bad configurations. By lemma \ref{lem1}, the game will reach 
either configuration 6.1 or 6.2. Since all 6 point sets are not bad configurations(lemma \ref{added}), the game will not end in the 7th step.
 By lemma \ref{lem2} the game will reach either configuration 7.1 or 7.2. By lemma \ref{le3}, the game will reach configuration 8 and finally 
since $N(5)=9$ the game ends in the 9th step and
player 2 wins the game.
\end{proof}

We will now show that any point added to configuration 8 forms an empty convex 5-gon and hence the empty convex 5-gon game also ends
in the 9th step. First we prove the following lemma.

\begin{lem}
\label{lem4} 
A (4,3,2) convex layer configuration has an empty convex 5-gon.
 \end{lem}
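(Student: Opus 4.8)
The plan is to reduce the statement to finding one empty convex quadrilateral inside the middle triangle and then extending it, and, where that fails, to exhibit a ``radial'' pentagon through the apex of the triangle. Write the three layers as the outer convex quadrilateral $ABCD$, the middle triangle $EFG$, and the innermost pair $H,I$. Two facts are immediate from the layer structure: the region strictly between $\triangle EFG$ and quadrilateral $ABCD$ contains none of the nine points, and the only points interior to $\triangle EFG$ are $H$ and $I$. First I would draw the line $\ell$ through $H$ and $I$; it crosses two edges of $\triangle EFG$ and hence isolates one vertex, say $E$, from the other two, $F$ and $G$. I claim $FGIH$ is an empty convex quadrilateral: none of these four points can lie inside the triangle spanned by the other three, because that quadrilateral is contained in $\triangle EFG$, whose only interior points are $H,I$ (which are vertices of $FGIH$ and moreover lie on the hull edge facing $E$), so convex position is forced; emptiness is clear since its interior lies in $\triangle EFG$ and meets none of the remaining points.

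Next I would observe that $W=FGIH$ cannot be extended to an empty pentagon by an interior point: the five points $E,F,G,H,I$ have $\triangle EFG$ as their convex hull with $H,I$ interior, so they are never in convex position. Hence every empty pentagon built on $W$ must use a vertex of $ABCD$. The key structural fact I would establish is that for each edge of $\triangle EFG$ there is an outer vertex strictly on its far side: since $F$ and $G$ lie in the interior of the convex quadrilateral $ABCD$, the line $\ell_{FG}$ has quadrilateral-interior on both sides, and a side of $ABCD$ all of whose endpoints lie on the near side cannot contribute area beyond $\ell_{FG}$; so some vertex of $ABCD$ lies strictly beyond edge $FG$.

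The pentagon would then be produced in two cases. \emph{Primary case:} if some outer vertex lies in the cone $\mathcal{C}_{FG}$ opposite $E$ across edge $FG$ (the region beyond $FG$ bounded by the extensions of $EF$ and $EG$), I would pick the one, $V$, closest to $\ell_{FG}$; then $\triangle FGV$ contains no other far-side point and is empty, and $FVGIH$ is convex because the $O$-region of the quadrilateral edge $FG$ contains $\mathcal{C}_{FG}$ (the chords $HF,IG$ lie flatter against $FG$ than the triangle edges $EF,EG$, since $H,I$ are inside $\triangle EFG$). \emph{Secondary case:} if no outer vertex lies in $\mathcal{C}_{FG}$, the far-side vertices all lie in the two flanking vertex-regions past $F$ or past $G$; here I would build a radial slice on, say, the $G$-side, namely the pentagon $V_{\mathrm{low}}\,G\,I\,E\,V_{\mathrm{high}}$, using a lower outer vertex beyond edge $EG$, the base vertex $G$, the inner point nearer $G$, the apex $E$, and an upper outer vertex, checking that it is convex and that the other base vertex $F$, the other inner point, and the remaining outer vertices all fall outside it.

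The main obstacle is the secondary case together with the emptiness bookkeeping throughout: I must make the case split on the placement of the four outer vertices among the six exterior regions of $\triangle EFG$ exhaustive, and in each case certify that the triangular or quadrilateral sliver added to $W$ (or swept out by the radial slice) contains none of the nine points. Here I expect to lean on the beam estimates already available---the Type 1 beam carrying at most one point and the Type 2 beam none---to bound how many outer vertices can occupy a single region, and thereby both limit the number of cases and force the added slivers to be empty. Verifying convexity of the radial slice when $H$ and $I$ are placed very asymmetrically (so that the ``nearer'' inner point and the apex $E$ become nearly aligned with the chosen outer vertices) is the most delicate point, and is where I would spend the most care.
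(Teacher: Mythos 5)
Your opening moves are sound and, in different clothing, match the paper's objects: the empty convex quadrilateral $FGIH$ cut off by the line through the two inner points is correct (convex position is forced because $F,G$ lie strictly on one side of that line while $H,I$ are the only points interior to $\triangle EFG$, whose vertices are extreme points), and your primary case works --- indeed your cone $\mathcal{C}_{FG}$ is contained in the paper's type 2 beam $HI:GF$, i.e.\ in the $O$ region of $FGIH$ at the edge $FG$, and your ``choose $V$ closest to the line $FG$'' device even patches an emptiness point the paper glosses over. The genuine gap is the secondary case, which is where the lemma actually lives, and you have flagged it yourself as the main obstacle rather than discharged it: for the radial pentagon $V_{\mathrm{low}}\,G\,I\,E\,V_{\mathrm{high}}$ you need \emph{two} outer vertices lying in a single wedge, namely the paper's type 1 beam $I:EG$ (apex the inner point $I$, through $E$ and $G$), and nothing you have established produces them. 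Knowing that each edge of $\triangle EFG$ has some vertex of $ABCD$ strictly beyond it does not do this: a priori the four outer vertices could be distributed with at most one in each relevant wedge, in which case no radial slice of the proposed form exists on either side; and on top of that the convexity and emptiness of the slice are left as unverified debts.

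What closes the hole --- and is essentially the entirety of the paper's proof --- is a covering-plus-pigeonhole statement you never formulate: the exterior region where the four hull points must lie is covered by the type 2 beam $HI:GF$ together with the two type 1 beams $H:EF$ and $I:EG$. If any outer vertex lands in the type 2 beam, you get your primary pentagon (with the cone enlarged to the full beam, which your ``flatter chords'' observation already justifies); otherwise all four outer vertices sit in the two type 1 wedges, so by pigeonhole one wedge contains at least two, and those two together with $I,E,G$ (resp.\ $H,E,F$) form exactly your radial pentagon --- with convexity now coming from the wedge definition rather than from the delicate asymmetry analysis you were dreading, and with only the convex-position caveat that the paper itself assumes for points in beams left to check. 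So your plan is completable and its pentagon shapes are the right ones, but as written the decisive step --- guaranteeing two same-side outer vertices and certifying the slice --- is missing, and it is precisely the three-beam covering and pigeonhole of the paper's argument that supplies it.
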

\begin{proof} 
Let $F,G,H,J$ be the 4 points of $CH(P)$. Let $A,B,C$ be the points
 in the second convex layer and $D,E$ be the points of the inner most convex layer  (see $figure$ \ref{222}).

The points $F,G,H,J$ lie
 in the union of beams formed by the beams $DE:CB,D:AB$ and $E:AC$. Since beam $DE:CB$
 is type 2, if there exists a point in its beam region then there  exists an empty convex 5-gon.
Thus, one of the type 1 beams $D:AB$ or $E:AC$ beams has atleast 2 points, which forms an empty convex 5-gon.
\end{proof}
\begin{figure}
  \centering
   {\includegraphics[width=0.23\textwidth]{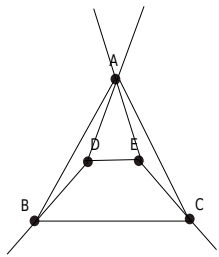}} 
   \hspace{0.1\textwidth}
  \caption{(3,2) configuration}
\label{222}
\end{figure}
\begin{figure}
   \centering
   {\includegraphics[width=0.7\textwidth]{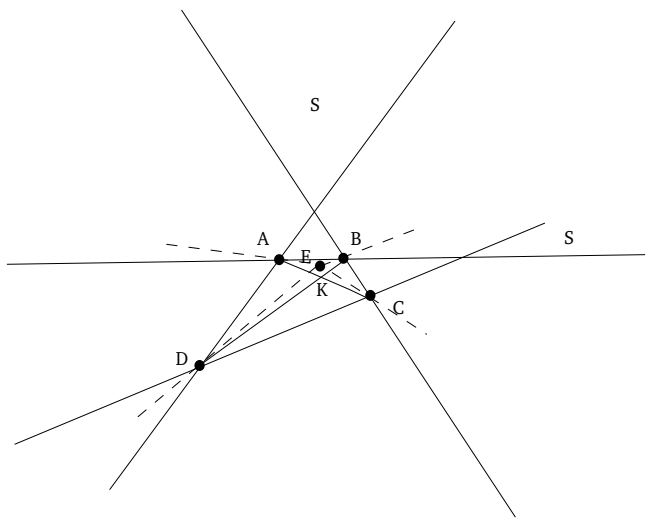}} 
   \hspace{0.5\textwidth}
   \caption{Case 1 for (4,4,1) configuration}
\label{type441}
\end{figure}
\vspace*{0.2cm}
\begin{lem}
\label{lem5} 
A (4,4,1) convex layer configuration has an empty convex 5-gon.
 \end{lem}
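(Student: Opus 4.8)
The plan is to follow the beam method of Lemma~\ref{lem4}. I would cover the part of the plane lying outside the second convex layer by a family of type~1 and type~2 beams, charge each beam with its capacity (a type~1 beam holds at most one extra point and a type~2 beam none, by the beam observations recorded before Lemma~\ref{lem1}), and then show that the four points of the first convex layer cannot be distributed among these beams without creating an empty convex 5-gon: either a first-layer point falls into a type~2 beam, giving an empty pentagon at once, or two of them share a single type~1 beam, and those two points together with the apex and the two base vertices again form an empty convex pentagon.

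Concretely, write $A,B,C,D$ for the second-layer quadrilateral (in convex position) and $X$ for the single innermost point, so $X$ lies inside $ABCD$; let $F,G,H,J$ be the four first-layer points. The feature that is absent in Lemma~\ref{lem4} is that $ABCD$ is \emph{not} empty --- it contains $X$ --- so it cannot itself serve as the base quadrilateral of a type~2 beam yielding an empty pentagon. The idea is to use $X$ to repair this: the four triangles $XAB,XBC,XCD,XDA$ tile $ABCD$ and are empty, and in the case at hand $X$ together with three of the vertices forms an empty convex quadrilateral (for example $XBCD$). The empty triangles seed the type~1 beams $X{:}AB,\dots,X{:}DA$, and the empty quadrilaterals seed the type~2 beams. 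I would then split into cases according to which of the four triangles cut by the diagonals $AC,BD$ contains $X$; by the symmetry of the quadrilateral it suffices to treat the case where $X$ lies in the triangle adjacent to edge $AB$ (Case~1, figure~\ref{type441}), the remaining cases following after relabelling.

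In that case I would cover the regions in which $F,G,H,J$ can lie by using the two empty quadrilaterals through $X$ to build type~2 beams in the two far directions and the beams with apex $X$ for the rest, arranged so that the covering beams have total capacity at most three; four first-layer points then force an empty pentagon by the dichotomy above. The main obstacle is not this counting step but the geometry behind it: one must verify that the chosen beams genuinely cover every region in which a first-layer point can sit, and that the pentagons produced are empty. This is exactly where the argument has to use more than the layer sizes --- a blind decomposition of the whole exterior into the four beams $X{:}AB,\dots,X{:}DA$ has total capacity four and forces nothing --- so the covering must exploit the placement of the first-layer points in the $I$ (vertex) regions of $ABCD$, as in Configuration~8. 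Establishing this coverage, with total capacity at most three, in each symmetric position of $X$ is the crux of the proof.
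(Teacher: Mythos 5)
You have correctly reconstructed the paper's scaffolding: the paper also takes the four type~1 beams with apex at the inner point (its $E$, your $X$) through the edges of $ABCD$, notes their union contains $F,G,H,J$, and in its Case~1 ($E$ in the diagonal triangle $ABK$ adjacent to $AB$) kills the far beam $E{:}CD$ by covering it with the two type~2 beams $AE{:}CD$ and $EB{:}CD$ built from exactly the empty quadrilaterals through the inner point that you name; pigeonholing four points into the three surviving type~1 beams then finishes. But your proposal stops precisely at what you yourself call ``the crux'': you never establish the covering with total capacity at most three, and that verification is the entire content of the lemma, so as written this is a plan rather than a proof. Moreover, the uniform single-case covering you hope for does not exist. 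Even in the case you fix ($X$ in the triangle adjacent to $AB$), the two type~2 beams are bounded laterally by the rays $\overrightarrow{AD}$ beyond $D$ and $\overrightarrow{BC}$ beyond $C$; if the lines $AD$ and $BC$ meet on the far side of $CD$ (for instance $ABCD$ a trapezoid narrowing toward $CD$: $A=(-2,2)$, $B=(2,2)$, $C=(1,0)$, $D=(-1,0)$, $X=(0,1.5)$), then the middle portion of the beam $X{:}CD$ beyond that crossing point lies in neither type~2 beam, and a first-layer point there, such as $(0,-100)$, forms no convex 5-gon at all with $A,B,C,D,X$ --- it is only the remaining three first-layer points, together with $A$ and $B$, that produce the empty pentagon. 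So capacity counting over beams cannot close every case.

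This is exactly the dichotomy the paper's proof reflects: alongside the beam-covering argument it has a second case of a genuinely different character, in which one first restricts the feasible exterior positions (any of $F,G,H,J$ in an $O$ region of an empty convex 4-gon finishes immediately), then observes that the far type~1 beam can hold at most one point, so three of the four first-layer points must land in the remaining feasible $I$-type regions, and those three points together with the two adjacent second-layer vertices $A,B$ form an empty convex 5-gon directly. You gesture at this (``the covering must exploit the placement of the first-layer points in the $I$ regions, as in Configuration~8'') but supply no argument. Note also that your relabelling reduction to a single position of $X$, while legitimate for the lemma as stated, does not dissolve the difficulty, because the failure of the type~2 covering is governed by the shape of $ABCD$ (whether the bounding rays converge beyond the far edge), not by which diagonal triangle contains $X$; and your phrase ``type~2 beams in the two far directions'' is off --- the two useful type~2 beams both point past the single far edge $CD$, jointly flanking the one far type~1 beam. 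To complete the proof you must either verify the coverage under an explicit divergence hypothesis and separately give the $I$-region/direct-pentagon argument for the remaining positions, as the paper does.
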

\begin{proof} Let $F,G,H,J$ be the points of $CH(P)$. Let $A,B,C,D$ be the points of the second convex layer.
 Let $K$ be the intersection of the diagonals of the convex 4-gon $ABCD$.
 Let $E$ be the point inside the convex 4-gon $ABCD$. The union of beams $E:AB,E:BC,E:CD,E:DA$ contain the points $F,G,H,J$.

Depending upon the position of point $E$ inside the $ABCD$ convex 4-gon the proof is divided into 4 cases.
\begin{figure}
   \centering
   {\includegraphics[width=0.7\textwidth]{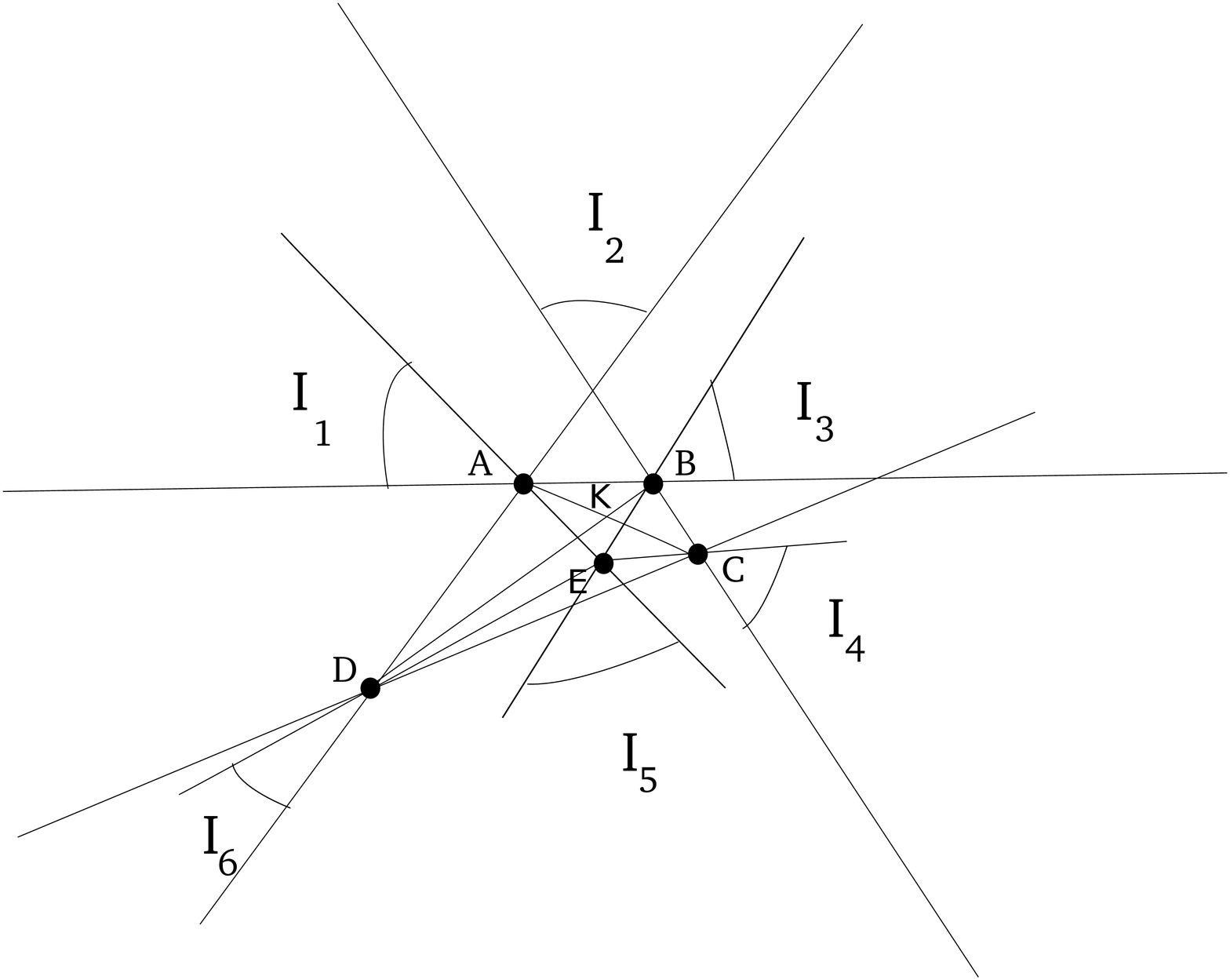}} 
   \hspace{0.5\textwidth}
   \caption{Case 2 for (4,4,1) configuration}
\label{type4412}
\end{figure}

\textbf{Case 1:} When ${E}$ is in the triangle $ABK$ (see $figure$  \ref{type441}).
In this case $E:CD$ beam is covered by the union of $AE:CD$ and $EB:CD$ beams. Therefore any point in the $E:CD$ beam forms an empty convex 5-gon. 
If $E:CD$  beam does not have any point, one of the 3 type 1 beams $E:BC,E:AB,E:DA$ has atleast 2 points (among $F,G,H,J$) which forms an empty convex 5-gon.  

\textbf{Case 2:} When $E$ is in the triangle $KCD$ (see $figure$  \ref{type4412}).
In this case $I_{1},I_{2},I_{3},I_{4},I_{5},I_{6}$ are the only feasible regions outside the $ABCD$ convex 4-gon for the placement of $F,G,H,J$ points. 
The remaining outer regions are present in the $O$ region of some convex 4-gon and hence not feasible.
The $E:CD$ beam contains $I_{4},I_{5},I_{6}$ regions. Since $E:CD$ beam is type 1 beam it can have atmost 1 point. Therefore $I_{1},I_{2},I_{3}$ regions combined should have 3 points.
From the figure we can see that the union of $I_{1},I_{2},I_{3}$ regions has 3 points which forms an empty convex 5-gon with the points $A,B$. 

The case where $E$ is in the triangle $KBC$ is symmetrical to the case when it is in triangle $ABK$.
The case where $E$ is in the triangle $KAD$ is symmetrical to the case when it is in triangle $KCD$.

\end{proof}
Let ${E,F,G,H}$ be the points of the first convex layer and  ${A,B,C,D}$  be the points of the second convex layer of configuration 8. 
Without loss of generality, let us assume that rays $\overrightarrow{AB},\overrightarrow{DC}$ intersect and rays
 $\overrightarrow{DA},\overrightarrow{CB}$ intersect.

 \begin{figure}
   \centering
   {\includegraphics[width=0.6\textwidth]{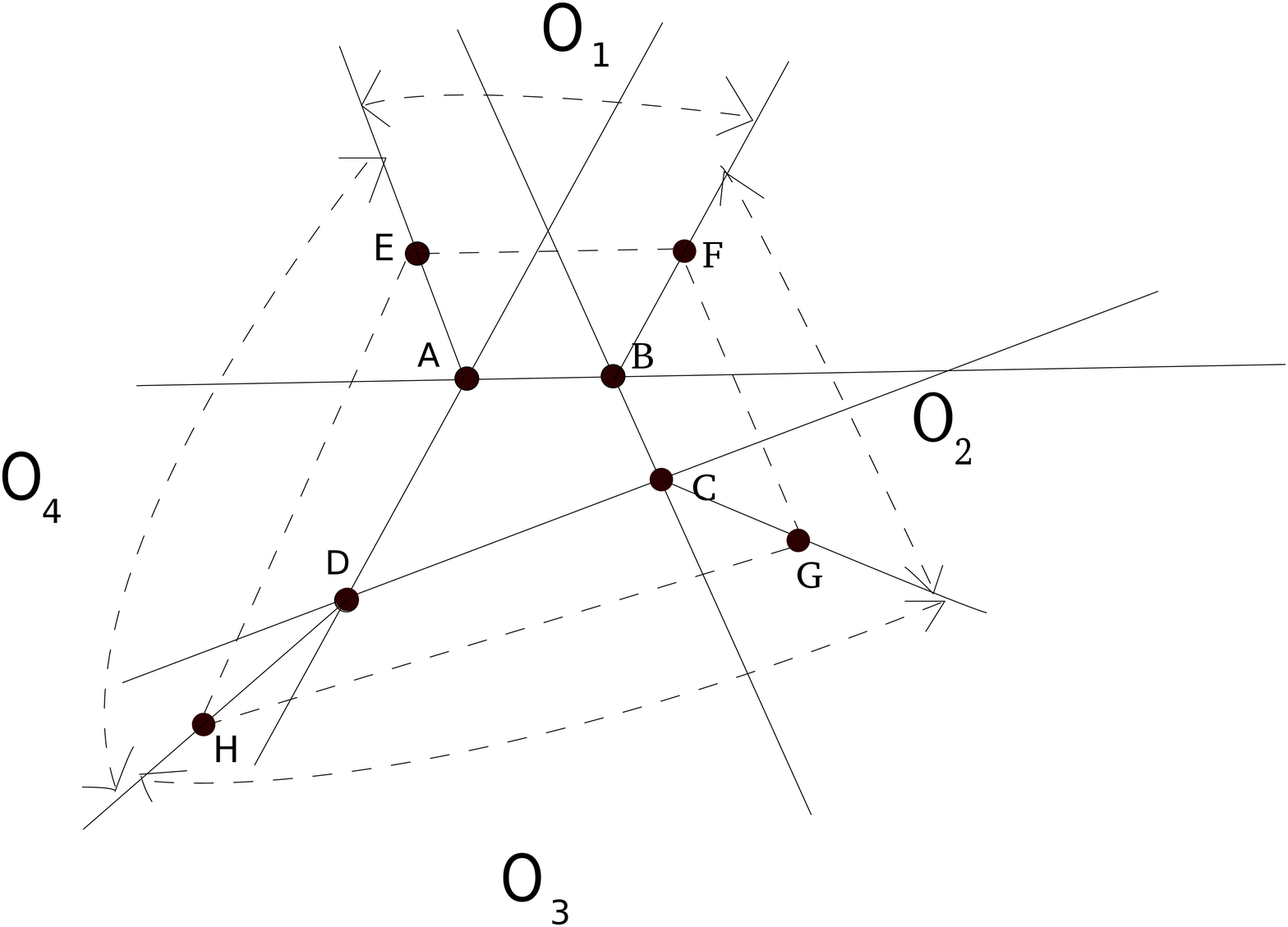}} 
   \hspace{0.5\textwidth}
   \caption{When no pair of rays intersect}
\label{bad44case1}
 \end{figure}

\begin{lem}
\label{lem6} 
The 4 type 2 beams AB : FE,BC : GF,DC : GH,AD : HE  cover the entire outer region in configuration 8.
\end{lem}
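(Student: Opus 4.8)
The plan is to exploit the cyclic structure of Configuration 8 to reduce the global covering statement to a purely local statement at each vertex of the outer hull. Write the outer (first) layer as the convex quadrilateral $EFGH$, traversed counterclockwise, and the inner (second) layer as $ABCD$, with $E,F,G,H$ lying respectively in the vertex ($I$) regions of $A,B,C,D$; then each of $A,B,C,D$ is interior to $EFGH$, and the four beams sit ``over'' the four edges $EF,FG,GH,HE$. Unwinding Definition \ref{type2}, the beam $AB\!:\!FE$ is bounded by the rays $\overrightarrow{AE}$ and $\overrightarrow{BF}$ together with the edge $EF$, so its two lateral boundaries lie on the lines $AE$ and $BF$; the analogous statement holds for the other three beams, and consecutive beams share a lateral boundary line (for instance $AB\!:\!FE$ and $AD\!:\!HE$ both have $AE$ as a boundary, since the edges $EF$ and $HE$ share the vertex $E$).

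First I would decompose the exterior of $EFGH$ into the usual eight pieces: the four edge regions (points seeing exactly one edge) and the four corner cones (points beyond a vertex, seeing two edges). The claim then becomes that each edge region lies in the beam over that edge, and each corner cone is covered by the two beams over its incident edges.

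The key geometric fact driving both containments is the following: since $A$ is interior to $EFGH$ and $E$ is a vertex, the ray from $A$ through $E$, continued beyond $E$, enters the corner cone at $E$. This holds for the general reason that the interior of a convex polygon, viewed from a vertex $E$, lies in the open wedge spanned by the two edges $EF,EH$ at $E$; hence $\overrightarrow{EA}$ points into that wedge, and its opposite ray, the continuation of $AE$ past $E$, points into the vertically opposite cone, which is exactly the corner cone at $E$. The same reasoning applies to $BF$ at $F$, $CG$ at $G$, and $DH$ at $H$. Consequently the shared boundary line $AE$ cuts the corner cone at $E$ into two subcones, and I would verify that the subcone on the $F$-side lies in $AB\!:\!FE$ while the subcone on the $H$-side lies in $AD\!:\!HE$, and likewise that the edge region beyond $EF$ lies wholly in $AB\!:\!FE$. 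Running this around all four vertices and edges shows the four beams cover the entire exterior. By the opening observation that a type 2 beam cannot contain a point without creating an empty convex 5-gon, this coverage is precisely what the subsequent argument needs.

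The main obstacle is the containment step itself, i.e. confirming that each half of a corner cone, and each edge region, lies \emph{inside} the relevant beam rather than merely abutting it. The delicate point is the far boundary: a half corner cone at $E$ is an unbounded region, and one must check it never crosses the beam's other bounding line (e.g. $BF$, which passes through the distant vertex $F$). This is where the standing hypothesis that the opposite side-rays $\overrightarrow{AB},\overrightarrow{DC}$ and $\overrightarrow{DA},\overrightarrow{CB}$ of the inner quadrilateral meet is convenient: it fixes a consistent orientation of $ABCD$, and hence of the four beams, so that beams over adjacent edges overlap along their shared line instead of leaving a gap, allowing all four corners to be disposed of by the same argument up to relabeling. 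I expect this bookkeeping---tracking at each vertex which side of the shared line each beam occupies---to be the only genuine work, the per-vertex verification being routine once the fact above places each of the lines $AE,BF,CG,DH$ strictly inside its corner cone.
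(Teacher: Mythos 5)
Your key geometric fact---that the continuation of $AE$ past $E$ enters the corner cone at $E$ because $A$ is interior to $EFGH$---is correct, and it does suffice for the containment of each edge region in its beam. The genuine gap is in the corner-cone step, and it is exactly the case the paper isolates. Your claim that the half of the corner cone at $E$ on the $F$-side of line $AE$ lies in $AB:FE$ requires more than your key fact: it requires the beam's \emph{other} bounding line $BF$ to stay out of the cone at $E$, i.e.\ that the ray $\overrightarrow{BF}$ extended beyond $F$ never meets the ray $\overrightarrow{HE}$ extended beyond $E$. This can fail in configuration 8. In particular, in the paper's Case 2, where $\overrightarrow{AE}$ and $\overrightarrow{BF}$ intersect at a point $X$ beyond $E$ and $F$, the beam $AB:FE$ degenerates to the bounded triangle $EFX$, so it cannot possibly contain an unbounded half-cone; and that half-cone lies on the wrong side of the shared line $AE$ to be in $AD:HE$. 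The lemma remains true there only because the leftover portion of the cone at $E$ is covered by the \emph{next} beam $BC:GF$ (and symmetrically by $DC:GH$)---a non-local reassignment of exterior regions to beams that a strictly per-vertex argument cannot see. (Even when no two of the four connecting rays cross, the line $BF$ can still enter the cone at $E$, so restricting to the paper's Case 1 does not rescue the containment either.)

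Your attempt to dismiss this ``far boundary'' issue by citing the standing hypothesis that $\overrightarrow{AB},\overrightarrow{DC}$ intersect and $\overrightarrow{DA},\overrightarrow{CB}$ intersect is a non sequitur: that hypothesis constrains the extensions of the \emph{sides of the inner quadrilateral} $ABCD$ and says nothing about where the connecting lines $AE,BF,CG,DH$ go beyond the outer vertices. The paper's proof, by contrast, takes the intersection pattern of $\overrightarrow{AE},\overrightarrow{BF},\overrightarrow{CG},\overrightarrow{DH}$ as its main case split, proves that at most one such pair can intersect in configuration 8, and then verifies coverage separately in that exceptional case, where the beams carve up the exterior differently. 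To complete your proof you would have to add essentially this case analysis (or else prove the stronger claim that no connecting line enters a neighboring corner cone, which does not follow from the definition of configuration 8); as written, the per-vertex verification you describe as routine is false precisely in the crossing case.
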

\begin{proof} 
We consider 2 cases depending on the intersections of the rays $\overrightarrow{AE},\overrightarrow{BF},\overrightarrow{CG},\overrightarrow{DH}$.

\textbf{Case1:} When none of the rays  $\overrightarrow{AE},\overrightarrow{BF},\overrightarrow{CG},\overrightarrow{DH}$ intersect each other (see $figure$ \ref{bad44case1}).

The  regions  covered by $AB : FE,BC : GF,DC : GH,AD : HE$ beams are correspondingly $O_{1},O_{2},O_{3},O_{4}$ beam regions
which cover the entire outer region in configuration 8.

\textbf{Case2:}  When some of the rays  $\overrightarrow{AE},\overrightarrow{BF},\overrightarrow{CG},\overrightarrow{DH}$ intersect each other (see $figure$ \ref{bad44case2}).
 
Now we prove that atmost one of the pairs of rays $(\overrightarrow{AE},\overrightarrow{BF})$ or $(\overrightarrow{CG},\overrightarrow{BF})$
 can intersect in configuration 8. Assume that $(\overrightarrow{AE},\overrightarrow{BF})$ intersect (see $figure$  \ref{bad44case2}). 
The case $(\overrightarrow{CG},\overrightarrow{BF})$ intersecting is symmetrical.

 Consider $EA$ and $FB$, extend them inside $ABCD$ convex 4-gon until they intersect with $CD$. 
Let the points of intersection be ${J,K}$. When $GC$ is extended inside the $ABCD$ convex 4-gon it has to intersect 
$BK$ before it intersects $AB$ or $AD$. Hence $(\overrightarrow{BF} , \overrightarrow{CG})$ do not 
intersect. By a similar reasoning, ray $\overrightarrow{HD}$ intersects $AJ$ before it intersects $AB$ or $BC$. 
Hence $(\overrightarrow{AE} , \overrightarrow{DH})$ do not intersect. 
 The pair of rays $(\overrightarrow{CG} , \overrightarrow{DH})$ do not intersect because the pair of rays $(\overrightarrow{BC} , \overrightarrow{AD})$
do not intersect.

The  regions  covered by $AB : FE,BC : GF,DC : GH,AD : HE$ beams are correspondingly $O_{1},O_{2},O_{3},O_{4}$ beam regions (see $figure$  \ref{bad44case2})
which cover the entire outer region in configuration 8.
 
\end{proof}
 \begin{figure}
   \centering
   {\includegraphics[width=0.6\textwidth]{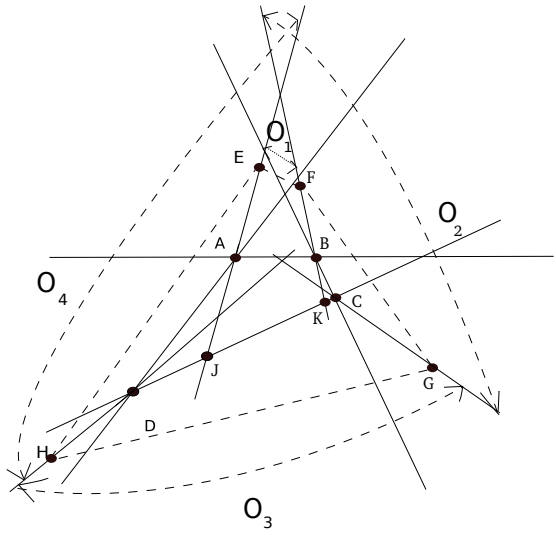}} 
   \hspace{0.5\textwidth}
   \caption{When one pair of rays intersect}
\label{bad44case2} 
\end{figure}
\begin{figure}
  \centering
   {\includegraphics[width=0.6\textwidth]{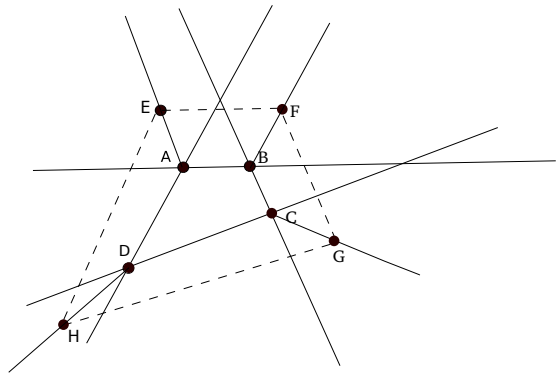}} 
   \hspace{0.1\textwidth}
   \caption{Configuration 8}
\label{y}
\end{figure}
\begin{lem}\label{proof for config8}
Any point added to configuration 8 forms an empty convex 5-gon. 
\end{lem}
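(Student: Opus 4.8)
The plan is to show that every region where a new point could be placed in configuration 8 is covered by the $O$ region of some empty convex 4-gon, so that any added point completes an empty convex 5-gon. The natural split, following the structure already set up in the excerpt, is between points added \emph{outside} the outer convex 4-gon $EFGH$ and points added \emph{inside} it. For the outer region, I would invoke Lemma \ref{lem6} directly: the four type 2 beams $AB{:}FE$, $BC{:}GF$, $DC{:}GH$, $AD{:}HE$ cover the entire outer region, and by the observation preceding Lemma \ref{lem1}, any point lying in a type 2 beam forms an empty convex 5-gon with the four points defining that beam. This disposes of the entire exterior in one stroke.

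\emph{For the interior} of $EFGH$, I would argue by adding the new point and examining the resulting convex layer structure. Configuration 8 is of $type(4,4)$ with each point $E,F,G,H$ of the first layer lying in a distinct $I$ region of the inner convex 4-gon $ABCD$. Adding a ninth point $X$ inside $conv(EFGH)$ produces a point set of $type(4,\ast,\ast)$ on $9$ points whose outer hull is still $EFGH$. The key case analysis is on where $X$ falls relative to the second layer $ABCD$: if $X$ lands inside $conv(ABCD)$, the resulting configuration is $type(4,4,1)$, and Lemma \ref{lem5} immediately yields an empty convex 5-gon; if $X$ lands between the two layers (inside $EFGH$ but outside $ABCD$), the resulting configuration has second layer of size $5$ or forms a $type(4,3,2)$ configuration, in which case Lemma \ref{lem4} applies. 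I would enumerate these subcases carefully, using the $I/O/S/Z$ region decomposition of $ABCD$ (Definition \ref{coongrm}) to pin down the new layer sizes.

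\emph{The main obstacle} I anticipate is the boundary between the two layers: when $X$ is placed inside $conv(EFGH)$ but outside $conv(ABCD)$, the second convex layer can absorb $X$ and change size, and one must verify that \emph{every} such placement yields either a $(4,3,2)$ or a $(4,4,1)$ layer structure so that Lemma \ref{lem4} or Lemma \ref{lem5} can be invoked. In particular I would need to check that $X$ cannot sit in an $S$ region in a way that breaks the hull of the second layer into something not covered by the two lemmas, and that the hypothesis of distinct $I$ regions for $E,F,G,H$ forces the second layer to remain a convex 4-gon (or drop to exactly the configurations handled). Handling this transition region cleanly — rather than the exterior, which Lemma \ref{lem6} already settles — is where the real work lies.

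\emph{In summary}, the proof reduces to three ingredients already in hand: Lemma \ref{lem6} (exterior coverage by type 2 beams), Lemma \ref{lem5} (the $(4,4,1)$ case), and Lemma \ref{lem4} (the $(4,3,2)$ case). The remaining task is the geometric bookkeeping that routes each possible placement of the ninth point into exactly one of these three lemmas, after which the conclusion that player 2 wins the empty convex 5-gon game in the 9th step follows immediately.
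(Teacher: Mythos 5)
Your proposal follows essentially the same route as the paper's proof: Lemma \ref{lem6} disposes of the exterior in one stroke, and the interior of $EFGH$ is routed through the $I/O/S/Z$ decomposition of $ABCD$ into Lemma \ref{lem4} and Lemma \ref{lem5}. The one place your sketched routing would fail is the annulus between the two layers. First, a placement in an $I$ region of $ABCD$ yields a $(4,4,1)$ configuration (the new point joins the second layer and a vertex of $ABCD$ drops into the third layer), not a $(4,3,2)$ one or a size-5 second layer, so it needs Lemma \ref{lem5}, not Lemma \ref{lem4}. Second, a placement in an $O$ region of $ABCD$ yields a second layer of size $5$, which neither lemma covers; this case is immediate, however, because the convex 5-gon formed by the new point $X$ with $A,B,C,D$ is itself empty: the triangle it adds to $conv(ABCD)$ lies inside that $O$ region, which contains none of $E,F,G,H$. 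With the interior subcases routed as the paper does --- $O$ region $\Rightarrow$ empty convex 5-gon directly, $S$ region $\Rightarrow$ $(4,3,2)$ $\Rightarrow$ Lemma \ref{lem4}, $I$ or $Z$ region $\Rightarrow$ $(4,4,1)$ $\Rightarrow$ Lemma \ref{lem5} --- your plan is complete and coincides with the paper's argument.
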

\begin{proof}

We consider 2 cases depending on whether the point is added inside $EFGH$ convex 4-gon or outside $EFGH$ convex 4-gon (see $figure$ \ref{y}).
\vspace{1cm}

\textbf{Case1:} Placing the point inside $EFGH$ convex 4-gon:

The point added lies in $I$ region, 
$S$ region, $Z$ region or  $O$ region of $ABCD$ convex 4-gon.
 If the point is placed in the $O$ region of $ABCD$ convex 4-gon then there exists an empty convex 5-gon.
 If point is placed  in the $S$ region of $ABCD$ convex 4-gon then it forms a (4,3,2) convex layer configuration which 
contains an empty convex 5-gon (lemma \ref{lem4}).
 If the point is placed in the $I$ or $Z$ region of the $ABCD$ convex 4-gon then it forms a (4,4,1) convex layer configuration which 
contains an empty convex 5-gon (lemma \ref{lem5}). 

\textbf{Case2:} Placing the point outside the $EFGH$ convex 4-gon:

By lemma \ref{lem6}, the 4 beams $AB:FE,BC:GF,DC:GH,AD:HE$  cover the entire outer region.
The added point lies in one of the 4 type 2 beams and forms an empty convex 5-gon.
\end{proof}

\begin{thm}
 The empty convex 5-gon game always ends in the 9th step.
\end{thm}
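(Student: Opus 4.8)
The plan is to mirror the proof of the preceding theorem (that the convex 5-gon game ends in the 9th step), replacing only the final appeal to $N(5)=9$ with Lemma \ref{proof for config8}, which carries the genuinely new content for the empty version. The argument splits into two halves: first, that the empty convex 5-gon game cannot end before the 9th step, and second, that it must end exactly at the 9th step.

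For the first half I would trace player 2's steering strategy and verify that no empty convex 5-gon is forced at any earlier move. The game cannot end at the 5th step, since no 4-point set is a bad configuration and player 1 therefore has a non-losing reply; by Lemma \ref{lem1} player 2 reaches configuration 6.1 or 6.2 after the 6th step, and none of configurations $4$, $5.1$, $5.2$, $6.1$, $6.2$ contains an empty convex 5-gon. The game cannot end at the 7th step because, by Lemma \ref{added}, no 6-point set is a bad configuration for the empty game: the $(3,3)$ and $(4,2)$ cases admit a feasible move avoiding even a convex 5-gon (hence a fortiori an empty one), and the $(5,1)$ case is handled directly. By Lemma \ref{lem2} the 7-point state must be configuration $7.1$ or $7.2$, and by Lemma \ref{le3} player 2 can place the 8th point so as to reach configuration $8$. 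Here I would stress the point that distinguishes the empty game from the convex game: configurations $7.2$ and $8$ may well contain a \emph{non-empty} convex 5-gon, which is harmless in the empty game; what matters, and what the defining lemmas guarantee, is that they contain no \emph{empty} convex 5-gon.

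For the second half I would simply invoke Lemma \ref{proof for config8}: once configuration $8$ has been reached after the 8th step, every point that can be added forms an empty convex 5-gon. Since the 9th move is player 1's, player 1 is forced to create an empty convex 5-gon and loses, so the game terminates precisely at the 9th step with player 2 winning.

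The main obstacle is organizational rather than geometric: I must confirm that player 1 cannot deviate so as to force player 2 into an early loss. This reduces to checking that every state player 2 is required to extend -- the $4$-, $6$-, and $8$-point states, together with the odd-step states $5.1/5.2$ and $7.1/7.2$ that player 2's earlier moves pin down -- is not a bad configuration, so that a feasible non-losing move always exists. Lemmas \ref{lem1}, \ref{lem2}, and \ref{le3} supply exactly these feasible regions, and Lemma \ref{added} excludes bad 6-point configurations. The only new geometric input beyond the convex-game proof is Lemma \ref{proof for config8}. I would close by recalling that, as already noted, the odd-step feasible regions of the convex game are a subset of those of the empty game, so the empty-game analysis subsumes the convex one and no separate verification of the odd steps is needed.
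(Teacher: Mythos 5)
Your proposal is correct and follows essentially the same route as the paper's own proof: it chains Lemmas \ref{lem1}, \ref{added}, \ref{lem2}, and \ref{le3} to show the game always survives to configuration 8, and then invokes Lemma \ref{proof for config8} to force an empty convex 5-gon on the 9th move. Your additional remarks (that configurations 7.2 and 8 may contain non-empty convex 5-gons, and that the odd-step analysis for the empty game subsumes the convex one) simply restate observations already made in the paper's overview, so nothing new is needed.
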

\begin{proof} The game does not end in the 5th step because no 4 point sets  are bad configurations. By lemma \ref{lem1}, the game will reach 
either configuration 6.1 or 6.2. Since all 6 point sets are not bad configurations (lemma \ref{added}), the game will not end in the 7th step.
 By lemma \ref{lem2} the game will reach either configuration 7.1 or 7.2. By lemma \ref{le3}, the game will reach configuration 8 and finally 
from lemma \ref{proof for config8} 
the game ends in the 9th step and
player 2 wins the game.
\end{proof}

\section*{Conclusion}
In our paper we have introduced the two player game variant of Erd\H{o}s-Szekeres problem 
and proved that the game ends in the 9th step for the convex 5-gon and empty convex 5-gon game and player 2 wins in both the cases.

One natural question would be to analyze the game for higher values of $k$ i.e. determine $N_{G}(k)$ and $H_{G}(k)$ for $k > 5$.
Our approach will be very tedious for higher values of $k$ as with the increase in the number
of points in the point set, the number of point configurations increases exponentially.

We have shown that configuration 8 is a bad configuration for the empty convex 5-gon game. 
Another natural question is to determine whether there exists bad configurations for $k > 5$.
 More specifically, does there exist point configurations with the property that any point added to this configuration
 forms an empty convex $k$-gon or convex $k$-gon for $k > 5$. A negative result for the above question gives a lower bound for $N_{G}(k)$ and $H_{G}(k)$.

\bibliographystyle{plain}

\begin{thebibliography}{10}

\bibitem{Avisonthe1}
D.Avis, K.Hosono, and M.Urabe.
\newblock On the existence of a point subset with 4 or 5 interior points.
\newblock In {\em Japanese Conference on Discrete and Computational Geometry,Lecture Notes in Computer Science}, volume 1763, pages
  57--64, 1998.

\bibitem{on}
David Avis, Kiyoshi Hosono, and Masatsugu Urabe.
\newblock On the existence of a point subset with a specified number of
  interior points.
\newblock {\em Discrete Math.}, 241(1-3):33--40, 2001.

\bibitem{Devillers03chromaticvariants}
Olivier Devillers, Ferran Hurtado, Gyula Karolyi, and Carlos Seara.
\newblock Chromatic variants of the Erd\H{o}s-szekeres theorem.
\newblock {\em Computational Geometry: Theory and Applications}, 26:193--208,
  2003.

\bibitem{Erdos78somemore}
P.Erd\H{o}s.
\newblock Some more problems on elementary geometry.
\newblock {\em Austral. Math. Soc. Gaz}, (5):99--18, 1978.

\bibitem{Erdos35acombinatorial}
P.Erd\H{o}s and G.Szekeres.
\newblock A combinatorial problem in geometry.
\newblock {\em Compositio Math}, (2):463--470, 1935.

\bibitem{Fevens04improvedlower}
T.Fevens.
\newblock A note on the point subset with specified number of interior points.
\newblock {\em Japanese Conference on Discrete and Computational Geometry,Lecture Notes in Computer Science}, 2866:152--158, 2002.

\bibitem{Gerken05onempty}
T.Gerken.
\newblock On empty convex hexagons in planar point sets.
\newblock {\em Discrete Comput. Geom},(39):239--272 ,2008.

\bibitem{Harborth79konvexefunfecke}
H.Harborth.
\newblock Konvexe Funfecke in ebenen Punktmengen.
\newblock {\em Elem. Math}, (33):116--118, 1978.

\bibitem{Horton83setswith}
J.D.Horton.
\newblock Sets with no empty convex 7-gons.
\newblock {\em Canad. Math. Bull}, (26):484, 1983.

\bibitem{Hosono2005201}
Kiyoshi Hosono.
\newblock On the existence of a convex point subset containing one triangle in
  the plane.
\newblock {\em Discrete Mathematics}, 305(1-3):201--218, 2005.

\bibitem{kalbfeisch}
R.G.Stanton J.D.Kalbfeisch.
\newblock A combinatorial problem on convex regions.
\newblock {\em In Proceedings of Louisiana conference on Combinatorics,Graph theory and computing},
  pages 180--188, 1970.

\bibitem{Kleitman98findingconvex}
D.Kleitman and L.Pachter.
\newblock Finding convex sets among points in the plane.
\newblock {\em Discrete Comput. Geom}, (19):405--410, 1998.

\bibitem{overmars}
M.H.Overmars.
\newblock Finding set of points without empty convex 6-gons.
\newblock {\em Discrete Comput. Geom}, 29:153--158, 2003.

\bibitem{Morris00theerdos-szekeres}
W.Morris and V.Soltan.
\newblock The Erd\H{o}s-Szekeres problem on points in convex position a survey.
\newblock {\em Bull. Amer. Math. Soc. (N.S)}, 37:437--458, 2000.

\bibitem{1340381}
Carlos~M.Nicolas.
\newblock The empty hexagon theorem.
\newblock {\em Discrete Comput. Geom.}, 38(2):389--397, 2007.

\bibitem{nonconvex}
T.Hackal O.Aichholzar.
\newblock Large bichromatic point sets admit empty monochromatic 4 -gons.
\newblock {\em In Proc. 25th European workshop on Computational Geometry
  EuroCG'09}, pages 133--136, 2009.

\bibitem{Szekeres_computersolution}
G.Szekeres and L.Peters.
\newblock Computer solution to the 17-point Erd\H{o}s-Szekeres problem.
\newblock {\em The ANZIAM Journal}, (48):151--164, 2006.

\bibitem{Toth04theerdos-szekeres}
Geza Toth and Pavel Valtr.
\newblock The Erd\H{o}s-Szekeres theorem: upper bounds and related results.
\newblock In{\em Combinatorial and Computational Geometry}.
 MSRI Publications, Cambridge University Press:557--568, 2005.

\bibitem{kos}
V.A.Koshelev.
\newblock On the Erd\H{o}s-Szekeres type problem in combinatorial geometry.
\newblock {\em Discrete mathematics}, (29):175--177, 2007.

\bibitem{JOUR}
Pavel Valtr.
\newblock Sets in $R^{d}$ with no large empty convex subsets.
\newblock {\em Discrete Mathematics}, 108:115-- 124, 1992/10/28.

\bibitem{1667532}
Xianglin Wei and Ren Ding.
\newblock More on an Erd\H{o}s-Szekeres-type problem for
  interior points.
\newblock {\em Discrete Comput. Geom.}, 42(4):640--653, 2009.


\bibitem{newsurvey}
I.Barany and G.Karolyi.
\newblock Problems and results around the Erd\H{o}s-Szekeres convex polygon theorem.
\newblock {\em Japanese Conference on Discrete and Computational Geometry,Lecture Notes in Computer Science} (2098):91--105, 2000.


\end{thebibliography}

\end{document}